\newcommand{\argmin}{\operatornamewithlimits{argmin}}
\newcommand{\BEAS}{\begin{eqnarray*}}
\newcommand{\EEAS}{\end{eqnarray*}}
\newcommand{\BEA}{\begin{eqnarray}}
\newcommand{\EEA}{\end{eqnarray}}
\newcommand{\BIT}{\begin{itemize}}
\newcommand{\EIT}{\end{itemize}}
\newcommand{\BNUM}{\begin{enumerate}}
\newcommand{\ENUM}{\end{enumerate}}
\newcommand{\diag}{\mathrm{diag}}
\newcommand{\indep}{\perp \!\!\!\! \perp}
\newtheorem{theorem}{Theorem}
\newtheorem{assumption}{Assumption}
\newtheorem{lemma}{Lemma}
\newtheorem{fact}{Fact}
\newtheorem*{remark}{Remark}
\newcommand{\bA}{\boldsymbol{A}}
\newcommand{\bB}{\boldsymbol{B}}
\newcommand{\bM}{\boldsymbol{M}}
\newcommand{\bS}{\boldsymbol{S}}
\newcommand{\bU}{\boldsymbol{U}}
\newcommand{\bV}{\boldsymbol{V}}
\newcommand{\bW}{\boldsymbol{W}}
\newcommand{\bX}{\boldsymbol{X}}
\newcommand{\bZ}{\boldsymbol{Z}}
\newcommand{\bbX}{\mathbb{X}}
\newcommand{\bzero}{\boldsymbol{0}}
\newcommand{\bGamma}{\boldsymbol{\Gamma}}
\newcommand{\bSigma}{\boldsymbol{\Sigma}}
\newcommand{\bPi}{\boldsymbol{\Pi}}
\newcommand{\bOmega}{\boldsymbol{\Omega}}
\colorlet{JL}{NavyBlue}
\colorlet{SP}{purple}
\colorlet{MS}{ForestGreen}
\title{Graph Estimation Based on Neighborhood Selection for Matrix-variate Data}
\author[1]{Minsub Shin}
\author[1]{Johan Lim}
\author[2,3]{Seongoh Park\footnote{To whom all correspondence should be addressed. Email: \texttt{spark6@sungshin.ac.kr}}}
\affil[1]{Department of Statistics, Seoul National University, Seoul, Korea} 
\affil[2]{School of Mathematics, Statistics and Data Science, Sungshin Women's University, Seoul, Korea} 
\affil[3]{Data Science Center, Sungshin Women's University, Seoul, Korea}
\date{} 
\date{\today}
\begin{document}
\maketitle

\baselineskip 24pt

\maketitle

\begin{abstract} 
	\noindent  
	Undirected graphical models are powerful tools for uncovering complex relationships among high-dimensional variables. This paper aims to fully recover the structure of an undirected graphical model when the data naturally take matrix form, such as temporal multivariate data.
	As conventional vector-variate analyses have clear limitations in handling such matrix-structured data, several approaches have been proposed, mostly relying on the likelihood of the Gaussian distribution with a separable covariance structure. Although some of these methods provide theoretical guarantees against false inclusions (i.e. all identified edges exist in the true graph), they may suffer from crucial limitations: (1) failure to detect important true edges, or (2) dependency on conditions for the estimators that have not been verified.
	We propose a novel regression-based method for estimating matrix graphical models, based on the relationship between partial correlations and regression coefficients. Adopting the primal-dual witness technique from the regression framework, we derive a non-asymptotic inequality for exact recovery of an edge set. Under suitable regularity conditions, our method consistently identifies the true edge set with high probability. 
	Through simulation studies, we compare the support recovery performance of the proposed method against existing alternatives. We also apply our method to an electroencephalography (EEG) dataset to estimate both the spatial brain network among 64 electrodes and the temporal network across 256 time points.

    \vskip0.5cm 
	\noindent {\bf Keywords:} 
	Gaussian graphical model; Kronecker product; undirected graph; support recovery; nodewise regression; matrix data; non-asymptotic inequality.
\end{abstract} 
\baselineskip 18pt

\section{Introduction}\label{sec:intro}

High-dimensional data are now commonplace in statistical analyses, where the sample size is much smaller than the data dimension. In particular, researchers are increasingly interested in matrix-variate data, where the observation from each subject is naturally represented as a matrix.
For example, time-course gene expression data involve repeated measurements of approximately $10,000$ genes across multiple time points \citep{coffey2014clustering, phat2023alterations}, whereas other studies \citep{zahn2007agemap} measure gene expression across different tissue types. In electroencephalography (EEG) data \citep{begleiter1995eeg}, signals are recorded from 64 electrodes at 256 Hz for each second, resulting in $64 \times 256$ measurements per run. In most such cases, the number of subjects is fewer than one hundred—far smaller than the data dimension, denoted by $d = p \times q$ where $p$ and $q$ represent the row and column dimensions of the data matrix, respectively.
In undirected graphical models, which are closely related to inverse covariance matrices, the number of parameters to estimate increases on the order of $d^2$. A straightforward approach to estimate the inverse covariance matrix is to vectorize it and apply standard vector-based methods, such as the graphical lasso (\citet{Friedman:2007}) or neighborhood selection (\citet{m&b2006}). However, this strategy leads to significant challenges both in computational and theoretical aspects, since the order of $d^2$ is often computationally intractable and the vectorization method fails to capture the intrinsic structure of matrix-variate data. In contrast, directly using the matrix structure helps overcome these challenges and improves estimation efficiency.

While matrix-variate data analysis has recently gained attention in statistics and computer science due to its advantages, its origin dates back to the matrix normal distribution \citep{Dawid:1981}. It models a random matrix $\bX \in \mathbb{R}^{p \times q}$ as $\bX \sim N_{p,q}(\bM, \bU, \bV)$, where $\bM \in \mathbb{R}^{p \times q}$ is a mean matrix and $\bU\in \mathbb{R}^{p \times p},\bV\in \mathbb{R}^{q \times q}$ are positive definite matrices. Equivalently, 
$$
\text{vec}(\bX) \sim N_{pq}(\text{vec}(\bM), \bV \otimes \bU),
$$
where the vectorization of a matrix $\text{vec}(\cdot)$ stacks columns into a vector and $\otimes$ is the Kronecker product between two matrices. Since the covariance matrix of $\text{vec}(\bX)$ is factorized into row-wise and column-wise components, i.e. $\bU$ and $\bV$, parameter estimation and interpretation become simpler and more intuitive.

The matrix-variate graphical model aims to estimate the precision matrices $\bOmega_U=\bU^{-1}$ and $\bOmega_V=\bV^{-1}$, since the 0-1 patterns of these matrices imply the conditional independence structure among variables. Consequently, $\ell_1$-regularized likelihood methods have been proposed in literature with slight modifications in each. Given centered matrix-variate normal data $\{\bX^{(i)}\}_{i=1}^n$,
\citet{Leng:2012}, \citet{Yin:2012}, and \citet{Tsiligkaridis:2013} proposed a method that alternately minimizes the following two penalized likelihood functions with a certain regularizer $p(\cdot)$:
\begin{equation}\label{eq:flip-flop}
	\begin{array}{c}
	\min \limits_{\bOmega_U \succ 0} \{ -\log|\bOmega_U| + \text{tr}( \tilde{\bU} \bOmega_U) + \lambda_U \cdot p(\bOmega_U) \},\\
	\min\limits_{\bOmega_V \succ 0} \{ -\log|\bOmega_V| + \text{tr}( \tilde{\bV} \bOmega_V) + \lambda_V \cdot p(\bOmega_V) \},
	\end{array}
\end{equation}
where $\tilde{\bU}=\sum_{i=1}^n \bX^{(i)} \bOmega_V (\bX^{(i)})^\top / (nq)$ and $\tilde{\bV}=\sum_{i=1}^n (\bX^{(i)})^\top \bOmega_U \bX^{(i)} / (np)$. This approach is essentially to minimize a negative log-likelihood function with appropriate regularization. 	 
They proved the asymptotic convergence rate of local minimizers in the Frobenius norm under several regularity conditions such as bounded eigenvalues, appropriate scaling of the tuning parameter, and the specific rate of sample size and dimension. 

Compared to the aforementioned approaches, which require iterative solutions to the graphical lasso problem (\citet{Friedman:2007}), \citet{Zhou:2014} proposed a non-iterative algorithm called ``GEMINI''. Instead of using $\tilde{\bU}$ or $\tilde{\bV}$ in \eqref{eq:flip-flop}, GEMINI combines the sample covariance matrices computed from row-binding or column-binding of the data with the graphical lasso:
$\tilde{\bS}_U=\sum_{i=1}^n \bX^{(i)} (\bX^{(i)})^\top / (nq)$ or $\tilde{\bS}_V=\sum_{i=1}^n (\bX^{(i)})^\top \bX^{(i)} / (np)$ (in fact, their correlation versions).
The author established the convergence rates of the GEMINI estimator in both the spectral norm and the Frobenius norm. Similar results were obtained by \citet{Wu:2023} under a weak sparsity assumption, using the Sparse Column-wise Inverse Operator (SCIO) \citep{Liu:2015} instead of the graphical lasso.

The primary goal of the graphical model is to uncover the edges in the graph, which is to identify non-zero elements of the precision matrix in the case of normal data. This support recovery problem has been less investigated in the context of matrix graphical models, and only two of the aforementioned studies provide theoretical guarantees for it. More precisely, \citet{Leng:2012} and \citet{Yin:2012} showed that their estimators achieve no false inclusions—that is, all estimated edges belong to the true edge set. However, these results are partial because some true edges may be dropped in the estimated graph, implying that significant connections can be undetected in applications. Furthermore, their theoretical guarantees rely on the assumption that the inverse covariance estimators satisfy a certain rate of convergence in the spectral norm. However, since they did not analyze this spectral norm convergence rate, the assumption remains unverified, casting doubt on the practical applicability of their work.

To fill this gap, we propose an extension of the neighborhood selection \citep{m&b2006} to matrix-variate data and develop a statistical theory for exact support recovery (Theorem \ref{mainthm}). Specifically, we regress each row variable on the remaining rows with $\ell_1$-regularization and read the zeros of coefficients to recover the zero pattern of $\bOmega_U$, and repeat the procedure along the columns to recover $\bOmega_V$.
This regression-based framework is attractive because it allows us to adopt established tools from high-dimensional regression—such as primal-dual witness constructions—to prove support recovery under the incoherence condition (Assumption \ref{assum:alpha}). Moreover, we derive a novel non-asymptotic bound in the spectral norm for the sample covariance matrix formed by stacking columns of the observations (Lemma~\ref{lem:Gamma_SS_dev}), which is completely new and stands on its own interest.

The rest of the paper is organized as follows.
In Section 2, we describe the neighborhood selection procedure for matrix-variate data, including details on estimation of support and selection of hyperparameters. We present theoretical properties and required assumptions in Section 3. Section 4 presents the empirical performance of proposed method through simulation studies, and compares it with other comparative methods. In Section 5, we apply the method to EEG data to demonstrate its applicability to real-world scenarios.
Finally, Section 6 concludes the paper with a discussion.

We conclude this section by briefly introducing the notations used throughout this paper. 
Let $[p]=\{1,2,\ldots,p\}$ for a positive integer $p$. Denote by $|S|$ the cardinality of a set $S$. For $k \in [p]$, define $(k)= \{1, \ldots, k-1, k+1, \ldots, p\}$.
For a vector $x \in \mathbb{R}^p$ and $k \in [p]$, denote a leave-one-out vector by $x_{(k)}=(x_1, \ldots, x_{k-1},x_{k+1} \ldots, x_p)^\top \in \mathbb{R}^{p-1}$. For $p \times q$ matrix $\bX=(X_1, \ldots, X_p)^\top$, $\bX_{(k)}$ is a $(p-1) \times q$ matrix where $X_k$ is eliminated. Moreover, for $S_1 \subset [p]$ and $S_2 \subset [q]$, let $\bX_{S_1 S_2}$ be the sub-matrix consisting of rows and columns indexed by $S_1$ and $S_2$, respectively. For a singleton set $S=\{s\}$, we use $s$ instead of $S$ to denote the sub-matrix. For example, $X_{s}$ denotes the $s$-th row of $\bX$.

We use $\Vert \cdot \Vert _p$ for both vector and matrix, as long as they do not cause confusion. For a vector $x \in \mathbb{R}^n$, let $ \Vert x \Vert _p=(\sum_{i=1}^n |x_i|^p)^{1/p}$ for $1 \leq p 
<\infty$ and $ \Vert x \Vert _{\infty}=\max_{1 \leq i \leq n}|x_i|$.
For a matrix $\bA \in \mathbb{R}^{p \times q}$, denote by $\Vert \bA \Vert _2=\sup_{x \neq 0} \Vert \bA x \Vert _2 / \Vert x \Vert _2$ the spectral norm of $\bA$, and by $\Vert \bA \Vert _F=\sqrt{\sum_{i=1}^p \sum_{j=1}^q a_{ij}^2}$ the Frobenius norm of $\bA$. Note that $ \Vert \bA \Vert _2$ is equal to the largest singular value of $\bA$, and $ \Vert \bA \Vert _2 \leq  \Vert \bA \Vert _F \leq \sqrt{\text{rank}(\bA)} \Vert \bA \Vert_2$ holds in general.
For $p \times p$ matrix $\bB$, denote the trace of $\bB$ by $\text{tr}(\bB)$, and the maximum and minimum eigenvalues by $\lambda_{\max}(\bB)$ and $\lambda_{\min}(\bB)$, respectively. While $\diag(\bB)$ denotes the vector $(b_{11}, b_{22}, \ldots, b_{pp})^\top$, $\diag_n(\bB)$ denotes the $np \times np$ block matrix with all diagonal blocks equal to $\bB$. Throughout this paper, we use bold symbols (e.g. $\bX$) to denote matrices and regular symbols (e.g. $X$) to represent vectors or scalars.

\section{Method}\label{sec:method}

\subsection{Matrix-variate data and graph structure}

We consider $p \times q$ matrix normal data with zero mean and a separable covariance structure:
\begin{equation}\label{eq:matnormal}
\bX \sim MN_{p, q}(\bzero_{p\times q}, \bU_{p \times p}, \bV_{q \times q}),
\end{equation}
which means $\text{vec}(\bX)$ is normally distributed with zero mean and covariance matrix $\bV \otimes \bU$, where $\bU \in \mathbb{R}^{p\times p}$ and $\bV \in \mathbb{R}^{q\times q}$ denote row-wise and column-wise covariance matrices, respectively. 
By the Markov property of the multivariate normal distribution, the graph structure can be inferred from the inverse covariance matrix of $\bX$, which is $\bOmega = \bV^{-1} \otimes \bU^{-1}$. More specifically, $(\bU^{-1})_{ab}=0$ implies the conditional independence of row vectors $X_{a}$ and $X_{b}$, given all other rows. In the graph $G=(N,E)$, where $N=[p] \times [q]$ and $E=\{(i,j) \in N \times N: i\neq j \}$, this corresponds to the absence of edges between the node sets $\{(a,v): v \in [q] \}$ and $\{(b,v): v \in [q] \}$. This paper aims to estimate the sparsity patterns of the precision matrices $\bU^{-1}$ and $\bV^{-1}$, thereby recovering the graph structure of matrix-variate data.

\subsection{Neighborhood selection for matrix-variate data}\label{subsection:method}

We extend the neighborhood selection framework proposed by \citet{m&b2006} to matrix-variate data. For this purpose, we consider the regression of $X_{a}$ on $\bX_{(a)}$ and define the coefficient vector that minimizes the mean squared error:
$$
\theta^a=\argmin_{\theta \in \mathbb{R}^{p-1}}\mathbb{E} \Vert X_{a}-\bX_{(a)}^\top \theta \Vert _2^2=\mathbb{E}[\bX_{(a)}\bX_{(a)}^\top]^{-1}\mathbb{E}[\bX_{(a)} X_{a}].
$$
From $\mathbb{E}[\bX\bX^\top]=\text{tr}(\bV)\bU$, we can see 
\begin{equation}\label{eq:relation_coef_precision}
\theta^a= \big\{ \text{tr}(\bV)\bU_{(a),(a)} \big\}^{-1} \big\{\text{tr}(\bV)\bU_{(a),a} \big\} = -\, \dfrac{(\bU^{-1})_{(a),a}}{(\bU^{-1})_{aa}},
\end{equation}
where the last equality uses the inversion of block matrices.
Thus, the regression coefficients $\{\theta^a\}_{a\in [p]}$ directly encode the sparsity pattern of $\bU^{-1}$. We adopt the $\ell_1$-regularization to induce sparsity in the coefficients. Let $\bX^{(1)}, \ldots, \bX^{(n)}$ be $n$ i.i.d. samples from \eqref{eq:matnormal}. Then for each $a \in [p]$, we solve
\begin{equation}\label{eq:lassomodel}
\hat{\theta}^a = \argmin_{\theta \in \mathbb{R}^{p},\, \theta_a = 0} \left\{ \frac{1}{2n}\sum_{i=1}^{n} \Vert X_{a}^{(i)} - (\bX^{(i)})^\top \theta \Vert _2^2+\lambda \Vert \theta \Vert _1 \right\},
\end{equation}
where $\lambda > 0$ is a tuning parameter.

We estimate the neighborhood of each variable $a$ as $\widehat{\mathcal{N}}(a)=\{b \in [p] : \hat{\theta}^{a}_b \neq 0 \}$. Due to the asymmetry of the neighborhood selection procedure, the estimated edge set can be constructed using either the AND or OR rule, as in the earlier work of \citet{m&b2006}: $\hat{E}_{\text{AND}}:=\{(a,b) \in [p] \times [p]: a \in \widehat{\mathcal{N}}(b) \text{ and } b \in \widehat{\mathcal{N}}(a)\}$ or $\hat{E}_{\text{OR}}:=\{(a,b) \in [p] \times [p]: a \in \widehat{\mathcal{N}}(b) \text{ or } b \in \widehat{\mathcal{N}}(a)\}$.

In Section \ref{sec:theory}, we show that the row-wise edge set $E_U=\{(a,b) \in [p] \times [p] : (\bU^{-1})_{ab} \neq 0 \}$ can be consistently recovered using either the AND or OR rule. The column-wise estimators can be obtained by applying the same procedure to the transposed data.
We present several remarks on the proposed method.
\begin{remark}
\text{ }
	\begin{enumerate}
        \item Note that the matrices $\bU$ and $\bV$ in \eqref{eq:matnormal} are not jointly identifiable, since any scalar factor may be transferred between them. Nonetheless, their graph structures are preserved up to a scalar multiple. Therefore, the neighborhood selection procedure is not affected by this identifiability issue. In particular, the edge estimate $\hat{E}$ derived from \eqref{eq:lassomodel} is well-defined for any admissible $\bU$ and satisfies the same theoretical properties.

		\item The row-wise and column-wise estimations can be performed independently. In contrast, likelihood-based approaches (e.g. \citet{Leng:2012,Yin:2012,Tsiligkaridis:2012}) involve alternating steps between updating row-wise and column-wise estimates, which can be computationally demanding.

		\item 
		The row-wise neighborhood selection in \eqref{eq:lassomodel} is equivalent to applying the standard neighborhood selection to the $nq$ vectors (i.e. column vectors of $\{X^{(i)}\}_{i \in [n]}$). This means that the penalized least-squares treats these vectors as if they were independent observations. Therefore, the working sample size is $nq$, not just $n$, which is referred to as the ``effective sample'' in earlier works (e.g. \citet{Leng:2012,Zhou:2014,Ning:2013,Chen:2019}). Our theory also confirms this interpretation; see the second remark following Theorem \ref{mainthm}.

		\item
		Existing theories, such as Theorem 1 in \citet{m&b2006}, are not directly applicable here due to the correlation among the $nq$ vectors. To address this challenge, we use the Hanson-Wright inequality (see \eqref{eq:HWineq} in Appendix) to derive a new concentration inequality for the sample covariance matrix formed by the $nq$ dependent vectors (see Lemma~\ref{lem:Gamma_SS_dev}).
		
		\item 
		In practice, we recommend a standardization process before applying the proposed method, i.e., transforming all $pq$ variables to have empirical mean zero and unit standard deviation.
		
	\end{enumerate}
\end{remark}

\subsection{Tuning parameter selection}\label{sec:tuning}

As our method relies on regularization, careful selection of the penalty parameter is crucial.
Given the nature of neighborhood selection, optimal tuning can be approached as in the standard regression framework. For example, consider the penalized regression in \eqref{eq:lassomodel} where node $a$ is regressed on the others. To determine an optimal hyperparameter, we split the dataset into $K$ folds and conduct a cross-validation procedure. Among a candidate set $\Lambda$, we select the optimal value $\hat{\lambda}_a\in \Lambda$ that minimizes the cross-validated prediction error for each node $a \in [p]$. However, this node-wise tuning can be computationally intensive, especially when $p$ is large.

For better efficiency, we instead use a single $\lambda_{\text{row}}$ for all row-wise regressions and another $\lambda_{\text{col}}$ for all column-wise regressions. We empirically compare two tuning procedures; (1) an ``individual'' procedure, which searches a separate hyperparameter for each node, and (2) a ``global'' procedure, which searches only two parameters $\lambda_{\text{row}}$ and $\lambda_{\text{col}}$. Our numerical study in Section \ref{subsec:tuning} demonstrates that the two procedures achieve comparable performance in graph recovery, thereby suggesting that the global procedure can be a preferable option.

\section{Theoretical results}\label{sec:theory}

In this section, we present our main result—consistency of the graph structure recovered by the proposed method—and the required assumptions.

First, we introduce the notion of $\alpha$-incoherence, also known as the irrepresentability condition. 
Let $\bGamma \in \mathbb{R}^{r \times r}$ be a positive definite matrix and $S$ be a subset of $[r]$. We say that $\bGamma$ is $\alpha$-incoherent with respect to $S$ if
$\max_{b \in S^c}  \Vert \bGamma_{bS}(\bGamma_{SS})^{-1} \Vert _1 \leq 1-\alpha$,
for given $\alpha \in (0,1]$. In our context, we require this condition to hold for $\bU_{(a),(a)}$ with respect to the neighborhood set $\mathcal{N}(a):=\{b \in [p]\setminus \{a\}:(\bU^{-1})_{ab} \neq 0 \}$, for each $a \in [p]$.

\begin{assumption}[$\alpha$-incoherence] \label{assum:alpha}
There exists a constant $\alpha >0$, independent of $p$ and $q$, such that $\bU_{(a),(a)}$ is $\alpha$-incoherent with respect to $\mathcal{N}(a)$ for all $ a \in [p]$.
\end{assumption}
\noindent
This assumption implies that the dependence between variables in $S$ and those in $S^c$ is sufficiently weak. 
We further assume sparsity in the graph structure. In particular, the degree of each node in the graph is bounded by a constant, even as $p$ or $q$ may grow.

\begin{assumption}[Maximum degree] \label{assum:sparse}
    There exists a constant $d_{\max} \in \mathbb{N}$, independent of $p$ and $q$, such that $|\mathcal{N}(a)| \leq d_{\max}$ for all $a \in [p]$.
\end{assumption}
\noindent
As in many related works, we assume that the eigenvalues of the covariance matrices are uniformly bounded. 

\begin{assumption}[Bounded eigenvalues] \label{assum:eigen}
There exist constants $\lambda_{\min}^U, \lambda_{\min}^V, \lambda_{\max}^U, \lambda_{\max}^V>0$, independent of $p$ and $q$, such that
$\lambda_{\min}^U \leq \lambda_{\min}(\bU) \leq \lambda_{\max}(\bU) \leq \lambda_{\max}^U$ and $\lambda_{\min}^V \leq \lambda_{\min}(\bV) \leq \lambda_{\max}(\bV) \leq \lambda_{\max}^V$.
\end{assumption}

We also define constants $u_{\max}:=\max_{1 \leq j \leq p}u_{jj}$ and $v_{\text{avg}}:=\text{tr}(\bV)/q$.
Based on these assumptions, we now introduce our main theorem.

\begin{theorem}[Support recovery] \label{mainthm}
Suppose Assumptions \ref{assum:alpha}, \ref{assum:sparse}, and \ref{assum:eigen} hold. Let $\bX^{(1)}, \ldots, \bX^{(n)}$ be independent samples from \eqref{eq:matnormal}. For any constant $\beta >1$, choose 
$$\lambda \geq c \sqrt{\beta \cdot v_{\text{avg}} \cdot \lambda_{\max}^V \frac{\log p}{nq}} \max \left( \frac{u_{\max}}{\alpha}, \sqrt{\frac{\lambda_{\min}^U}{d_{\max}}} \right)$$
for some constant $c>0$. Moreover, assume either conditions (a), (b) and (c1) hold or (a), (b) and (c2) hold, for some constants $c_1, c_2 > 0$.
\begin{enumerate}
    \item[(a)]  $\displaystyle\frac{nq}{\log p} \geq c_1 \beta \cdot\max \left(\frac{d_{\max}}{\alpha^2} \cdot\frac{u_{\max}}{\lambda_{\min}^U} \frac{\lambda_{\max}^V}{v_{\text{avg}}}
    , \left( \frac{ \lambda_{\max}^V}{ v_{\text{avg}}}\right)^2\right)$,
    
    \item[(b)] $\displaystyle n \left( \frac{2d_{\max}}{q} + 1 \right) \geq (\lambda_{\max}^V)^2$,
    
    \item[(c1)] $\displaystyle \frac{n}{\log p} \ge c_2^2 \beta \left(\frac{2d_{\max}}{q} + 1 \right) \cdot\frac{ \lambda_{\max}^U}{\lambda_{\min}^U} \left( \frac{ \lambda_{\max}^V}{ v_{\text{avg}}}\right)^2, \quad
	\frac{n}{q^2} \le  c_2^4 \left( \frac{2d_{\max}}{q}  + 1 \right)^3 \cdot \left( \frac{\lambda_{\max}^U}{\lambda_{\min}^U }\right)^4 \frac{(\lambda_{\max}^V)^2}{v_{\text{avg}}^4}$,
    
    \item [(c2)] $\displaystyle \frac{nq}{(\log p)^{3/2}} \ge c_2\beta^{3/2} \cdot \frac{\lambda_{\max}^U }{\lambda_{\min}^U } (\lambda_{\max}^V)^2, \quad
	\frac{n}{q^2} >  c_2^4 \left( \frac{2d_{\max}}{q}  + 1 \right)^3 \cdot\left( \frac{\lambda_{\max}^U}{\lambda_{\min}^U }\right)^4\frac{(\lambda_{\max}^V)^2}{v_{\text{avg}}^4}$.
\end{enumerate}
\noindent
Based on the matrix neighborhood selection \eqref{eq:lassomodel}, estimate the edge set $\hat{E}$ by either
$\hat{E}_{\text{AND}}$ or $\hat{E}_{\text{OR}}$ as in Section \ref{subsection:method}.
Then, with probability at least $1- \max(3d_{\max}+6,\ 5d_{\max})/p^{\beta-1}$, the estimated edge set $\hat{E}$ successfully recovers the true set $E_U=\{(a,b) \in [p] \times [p] : (\bU^{-1})_{ab} \neq 0 \}$ according to the following criteria:
\begin{itemize}
	\item[(i)] No false inclusions: $\hat{E} \subset E_U$,
	\item[(ii)] No false exclusions:
	$\hat{E} \supset E_U$ if $\min_{(a,b) \in E_U}|(\bU^{-1})_{ab}| \geq 3\lambda\sqrt{d_{\max}}/(\lambda_{\min}^Uv_{\text{avg}})$.
\end{itemize}
\end{theorem}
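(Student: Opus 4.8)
\noindent\emph{Proof plan.}
Fix a row index $a\in[p]$. By \eqref{eq:relation_coef_precision} the population coefficient $\theta^a$ has support exactly $\mathcal{N}(a)=\{b:(\bU^{-1})_{ab}\neq0\}$, so it suffices to prove the following for a single node: the nodewise Lasso \eqref{eq:lassomodel} satisfies $\widehat{\mathcal{N}}(a)\subseteq\mathcal{N}(a)$ on a high-probability event, and $\widehat{\mathcal{N}}(a)=\mathcal{N}(a)$ on the same event once the min-signal bound holds, with failure probability at most $\max(3d_{\max}+6,5d_{\max})/p^{\beta}$. A union bound over $a\in[p]$, together with the symmetry of the true precision matrix, then forces both $\hat{E}_{\text{AND}}$ and $\hat{E}_{\text{OR}}$ to equal $E_U$ on the intersection event, whose probability is at least $1-\max(3d_{\max}+6,5d_{\max})/p^{\beta-1}$. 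So the theorem collapses to one Lasso support-recovery problem, and the tool of choice is the primal--dual witness (PDW) construction imported from high-dimensional regression. The one genuinely new feature is that the ``observations'' fed to \eqref{eq:lassomodel} are the $nq$ columns of the matrices $\bX^{(i)}_{(a)}$, which are \emph{not} independent: columns inside a common $\bX^{(i)}$ are correlated through $\bV$.

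Write $S=\mathcal{N}(a)$, $s=|S|\le d_{\max}$, let $\hat\bGamma=\hat\bGamma^{(a)}$ be the empirical Gram matrix of those $nq$ columns, normalized so that $\mathbb{E}\hat\bGamma=v_{\text{avg}}\,\bU_{(a),(a)}$ (the ``effective sample size'' being $nq$), and let $W=W^{(a)}$ be the corresponding empirical score vector, i.e. the $nq$-normalized average over $i$ of $\bX^{(i)}_{(a)}$ times the population residual $X^{(i)}_{a}-(\bX^{(i)}_{(a)})^\top\theta^a$. The PDW construction forms $(\tilde\theta,\tilde z)$ with $\tilde\theta_{S^c}=0$, $\tilde\theta_S$ the minimizer of \eqref{eq:lassomodel} restricted to coordinates in $S$, $\tilde z_S=\mathrm{sign}(\tilde\theta_S)$, and $\tilde z_{S^c}$ determined by stationarity; by the standard sufficiency lemma, $\Vert\tilde z_{S^c}\Vert_\infty<1$ gives uniqueness of $\tilde\theta$ as the Lasso solution and hence no false inclusions, while $\mathrm{sign}(\tilde\theta_S)=\mathrm{sign}(\theta^a_S)$ gives no false exclusions. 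Eliminating $\tilde\theta_S$ from the KKT conditions on and off $S$ yields
\begin{gather*}
\tilde z_{S^c}=\hat\bGamma_{S^cS}(\hat\bGamma_{SS})^{-1}\tilde z_S+\tfrac{1}{\lambda}\big(W_{S^c}-\hat\bGamma_{S^cS}(\hat\bGamma_{SS})^{-1}W_S\big),\\
\tilde\theta_S-\theta^a_S=(\hat\bGamma_{SS})^{-1}\big(W_S-\lambda\tilde z_S\big),
\end{gather*}
so everything reduces to controlling the empirical incoherence $\max_{b\in S^c}\Vert\hat\bGamma_{bS}(\hat\bGamma_{SS})^{-1}\Vert_1$, the eigenvalue $\lambda_{\min}(\hat\bGamma_{SS})$, and the sup-norm $\Vert W\Vert_\infty$.

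Since $\hat\bGamma_{bS}(\hat\bGamma_{SS})^{-1}$ is invariant to scaling of $\hat\bGamma$, its population version is $\bU_{bS}(\bU_{SS})^{-1}$, bounded in $\ell_1$ by $1-\alpha$ by Assumption \ref{assum:alpha}; likewise $\lambda_{\min}(\hat\bGamma_{SS})$ has population value $v_{\text{avg}}\lambda_{\min}(\bU_{SS})\ge v_{\text{avg}}\lambda_{\min}^U$ by Assumption \ref{assum:eigen}. To transfer these to the empirical matrices I would expand $\hat\bGamma_{bS}(\hat\bGamma_{SS})^{-1}-\bU_{bS}(\bU_{SS})^{-1}$ as a telescoping expression in the deviations $\hat\bGamma_{SS}-v_{\text{avg}}\bU_{SS}$ and $\hat\bGamma_{bS}-v_{\text{avg}}\bU_{bS}$ and invoke the spectral-norm concentration bound of Lemma~\ref{lem:Gamma_SS_dev}; on the resulting event this gives $\max_{b\in S^c}\Vert\hat\bGamma_{bS}(\hat\bGamma_{SS})^{-1}\Vert_1\le1-\alpha/2$ and $\lambda_{\min}(\hat\bGamma_{SS})\ge\tfrac12 v_{\text{avg}}\lambda_{\min}^U$, and conditions (a), (b) together with (c1) (or, alternatively, (c2)) are exactly the sample-size requirements making the deviation terms of Lemma~\ref{lem:Gamma_SS_dev} negligible, the two alternatives handling the two regimes of the Hanson--Wright tail (driven by the relative size of $n$ and $q^2$, i.e. which of the $\sqrt{\cdot}$ and $\cdot$ terms dominates). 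For $\Vert W\Vert_\infty$ the separable structure is decisive: conditionally on $\{\bX^{(i)}_{(a)}\}_i$ each residual $X^{(i)}_{a}-(\bX^{(i)}_{(a)})^\top\theta^a$ is $N\!\big(0,\bV/(\bU^{-1})_{aa}\big)$, and since this conditional covariance does not depend on the conditioning it is independent of $\bX^{(i)}_{(a)}$; hence $W$ is conditionally a centered Gaussian vector whose coordinatewise variances are quadratic forms in the $\bX^{(i)}_{(a)}$ that concentrate (again via Hanson--Wright) at scale $O\!\big(u_{\max}^2\lambda_{\max}^V v_{\text{avg}}/(nq)\big)$, so a Gaussian maximal inequality plus a union bound over the $p$ coordinates gives $\Vert W\Vert_\infty\le c'\,u_{\max}\sqrt{\beta\, v_{\text{avg}}\lambda_{\max}^V\log p/(nq)}$ with probability $\ge1-p^{-\beta}$. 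Feeding these into the two displayed identities, the stated lower bound on $\lambda$ makes $\tfrac{1}{\lambda}\big(\Vert W_{S^c}\Vert_\infty+(1-\alpha/2)\Vert W_S\Vert_\infty\big)<\alpha/2$, so $\Vert\tilde z_{S^c}\Vert_\infty<1$; it also makes $\Vert(\hat\bGamma_{SS})^{-1}(W_S-\lambda\tilde z_S)\Vert_\infty$ a small multiple of $\lambda\sqrt{d_{\max}}/(v_{\text{avg}}\lambda_{\min}^U)$, which by \eqref{eq:relation_coef_precision} and Assumption \ref{assum:eigen} is strictly below $\min_{b\in S}|\theta^a_b|$ whenever $\min_{(a,b)\in E_U}|(\bU^{-1})_{ab}|\ge3\lambda\sqrt{d_{\max}}/(\lambda_{\min}^U v_{\text{avg}})$, so $\mathrm{sign}(\tilde\theta_S)=\mathrm{sign}(\theta^a_S)$. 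Collecting failure probabilities and summing over $a$ gives the stated bound.

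\noindent\emph{Main obstacle.} The technical core is Lemma~\ref{lem:Gamma_SS_dev}: since the $nq$ design vectors are dependent, the classical sample-covariance concentration used in ordinary neighborhood selection does not apply. My plan is to write $\text{vec}(\bX^{(i)}_{(a)})$ as a Gaussian vector with covariance $\bV\otimes\bU_{(a),(a)}$, express each entry of $\hat\bGamma-\mathbb{E}\hat\bGamma$ as a centered Gaussian quadratic form, bound it via the Hanson--Wright inequality, and then pass from entrywise (or fixed-direction) control to a spectral-norm bound on the relevant $s\times s$ and $1\times s$ submatrices by an $\varepsilon$-net argument on the sphere --- all while carefully tracking the dependence on $n$, $q$, $d_{\max}$ and the eigenvalue constants, which is precisely what produces the somewhat intricate conditions (b)--(c2). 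Once that lemma is in hand, the remaining ingredients --- the PDW algebra, the conditional-Gaussian control of $\Vert W\Vert_\infty$, and the AND/OR combination --- are routine adaptations of the Wainwright-style argument.
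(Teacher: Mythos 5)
Your overall architecture coincides with the paper's: fix a node $a$, run the primal--dual witness for the nodewise lasso with effective sample size $nq$, control the minimum eigenvalue of $\hat\bGamma_{SS}$ via a spectral concentration lemma, handle the score term by conditional Gaussianity of the residuals (your observation that the residual covariance is $\bV/(\bU^{-1})_{aa}$, hence independent of the design, is exactly what the paper uses), and finish with a union bound over $a$ and the AND/OR rule. The place where your plan genuinely diverges --- and where the gap lies --- is the incoherence term. You propose to bound the \emph{empirical} incoherence $\max_{b\in S^c}\Vert\hat\bGamma_{bS}(\hat\bGamma_{SS})^{-1}\Vert_1\le 1-\alpha/2$ by a telescoping perturbation from the population version. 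That requires uniform spectral control of $\hat\bGamma_{bS}-\bGamma_{bS}$ over all $b\in S^c$ (an extra union bound over $\sim p$ row blocks, which Lemma~\ref{lem:Gamma_SS_dev} as stated does not directly give), and the perturbation bound inflates the sample-size requirement by additional factors of $\sqrt{d_{\max}}$ and of the condition number $\lambda_{\max}^U/\lambda_{\min}^U$ relative to condition (a); so your assertion that conditions (a), (b), (c1)/(c2) are ``exactly'' what makes these deviations negligible is unsubstantiated, and in the natural accounting your route needs strictly stronger conditions than the theorem states. The paper avoids this entirely: it writes $\bbX^\top_{S^c}=\bGamma_{S^cS}(\bGamma_{SS})^{-1}\bbX^\top_S+\tilde{\bW}^\top_{S^c}$ with $\tilde{\bW}_{S^c}$ Gaussian and independent of $\bbX_S$, so the population incoherence $(1-\alpha)$ enters verbatim and only a conditionally Gaussian remainder $R$ must be bounded, whose variance is controlled by the single event $\Vert\hat\bGamma_{SS}^{-1}\Vert_2\le 2/(\lambda_{\min}^U v_{\text{avg}})$. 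If you adopt this conditional decomposition, the rest of your plan goes through and reproduces condition (a).

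A second, smaller mismatch: you plan to prove Lemma~\ref{lem:Gamma_SS_dev} by Hanson--Wright on fixed-direction quadratic forms plus an $\varepsilon$-net. The paper instead truncates $\Vert\bZ_S^{(i)}\Vert_2^2$ and applies the matrix Bernstein inequality; the truncation level growing with the confidence parameter is precisely what produces the three-regime tail $\min\{(t/a_1)^2,\,t/a_2,\,(t/a_3)^{2/3}\}$, and hence the specific structure of conditions (b), (c1), (c2) and the two regimes distinguished by the relative size of $n$ and $q^2$. A Hanson--Wright-plus-net argument yields a two-regime tail, so it would lead to differently shaped dimension conditions; it may well suffice for a theorem of this type, but you would need to verify that the stated (b), (c1), (c2) still imply your version of the required event, which your proposal does not do. The Hanson--Wright inequality is used in the paper only to bound $\Vert\epsilon\Vert_2$ in the noise term.
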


The core of the proof relies on the primal-dual witness argument, a standard technique for establishing variable selection consistency in linear regression. To apply the technique, one must handle a random design matrix, which requires analyzing the row-wise sample covariance matrix defined by $\sum_{i=1}^n \bX^{(i)}(\bX^{(i)})^{\top}/nq$. Lemma \ref{lem:Gamma_SS_dev} provides a new non‐asymptotic spectral norm bound for this matrix, which may be of independent interest. Another challenge in the theoretical analysis is the correlation among the columns of $\bX^{(i)}$, which is also considered in the derivation. The complete proof is provided in \ref{sec:pf_mainthm}. Note that results for column-wise estimation can be obtained similarly by adjusting the assumptions accordingly.

Theorem \ref{mainthm} implies several interesting findings. 
\begin{enumerate}
    \item 
    The ``no false exclusions" result is new in the literature on matrix graphical models. It suggests that the proposed method can detect all signals whose magnitudes exceed a certain threshold. This condition is adapted from the ``beta-min'' condition frequently used in the regression literature, and is also required in the neighborhood selection for vector-variate data (see Assumption 5 in \citet{m&b2006}). The magnitude of the threshold decreases —meaning more true edges are detectable— as (1) $p$ decreases and $n,q$ increase, given the choice of $\lambda$ on the order of $\sqrt{\log p / nq}$, and (2) the maximum degree decreases.

    \item 
    We use $a \gtrsim b$ to indicate that $a$ is greater than $b$ up to a constant factor, independent of $n$, $p$, and $q$. The dimension regime (c1) roughly corresponds to $A \cap B$, while (c2) corresponds to $A^c \cap C$, where
    $$
    A=\{q^2 \gtrsim n\}, \quad B=\{n \gtrsim \log p\}, \quad C=\{nq \gtrsim (\log p)^{3/2}\}.
    $$
    Since $A \cap B \subset C$ and $A^c \cap C \subset B$, the union of $(c1)$ and $(c2)$ simplifies to $B \cap C$, or equivalently,
    $$
    n \gtrsim \log p, \quad nq \gtrsim (\log p)^{3/2}.
    $$
    As a consequence, when $q$ is fixed, at least $O\big((\log p)^{3/2} \big)$ samples are required. If $q$ is of order $(\log p)^{1/2}$, fewer samples of order $O(\log p)$ suffice. This highlights that the effective sample size for estimating the row-wise graph structure is indeed $ nq $.
    However, this regime does not cover the case where $n$ is fixed and $q$ diverges.

    \item
    Regarding the sample size condition, a close look at related results in the literature concludes that no false inclusion typically requires $p \log p = o(n)$ (\citet{Yin:2012,Leng:2012}) or $p \log p = o(nq)$ (\citet{Tsiligkaridis:2012,Tsiligkaridis:2013,Ning:2013}), both of which are stronger than ours, which requires only $\log p =o(n)$ and $\log p =o(nq)$.

    \item
    At first glance, the condition on $\lambda$ in Theorem \ref{mainthm} may seem counter-intuitive: as $\lambda_{\min}^U$ moves away from zero, the lower bound on $\lambda$ increases. However, statement (ii) shows that using this optimal admissible $\lambda$ yields a detection threshold proportional to $(\lambda_{\min}^U)^{-1/2}$. 
    Hence, as $\lambda_{\min}^U$ increases, the detection threshold decreases.
    Moreover, the choice of $\lambda$ in the order of $ \lambda \asymp \lambda_{\min}^U $ concludes that the tail probability increases as $ \lambda_{\min}^U $ approaches zero, which is consistent with our intuition (see \eqref{lambda_tailprob} in the proof).

\end{enumerate}

\section{Simulation study}\label{sec:simulation}
\subsection{Setup}
In this section, we report our numerical simulation results to evaluate the performance of proposed graph estimation method for matrix-variate data. We compare our matrix neighborhood selection method (hereafter, matrixNS) with two existing approaches: the GEMINI algorithm of \citet{Zhou:2014} and the penalized‐likelihood flip‐flop algorithm of \citet{Leng:2012}. In each simulation, we generate inverse covariance matrices $\bU^{-1}$ and $\bV^{-1}$ according to the following graph structures, described in \citet{Chen:2019}.

\begin{itemize}
    \item Hub graph (``hub"): For fixed $p$, let $$\omega_{ij}= \begin{cases}
        1 \quad \text{if}\ \ i=j\\
        \rho_{\text{hub}}\quad \text{if}\ i=10(k-1)+1,\ 10(k-1)+2 \leq j \leq 10k\\
        0 \quad \text{otherwise}
    \end{cases}$$
    for $1 \leq k \leq p/10$.
    \item Band graph (``band"): Set $$\omega_{ij}= \begin{cases}
        1 \quad \text{if}\ \ i=j\\
        \rho_{\text{band}}\quad \text{if}\ |i-j|=1\\
        \rho_{\text{band}}/2 \quad \text{if}\ |i-j|=2\\
        0 \quad \text{otherwise}.
    \end{cases}$$
    \item Erd\"{o}s-R\'{e}nyi random graph (``random"): Each off-diagonal entry $\omega_{ij}$ has value $\rho_{\text{rand}}$ with probability $\min(0.05, 5/p)$ independently, and otherwise zero. All diagonal terms are set to $1$.
\end{itemize}
We set the signal size parameters $\rho_{\text{hub}}, \rho_{\text{band}}$, and $\rho_{\text{rand}}$ to 0.4, 0.6, and 0.4 respectively. Different signal sizes are also explored, and the results are presented in Appendix.
For the hub and random $\bOmega$, we modify each matrix by $\Tilde{\bOmega} = \bOmega+ (0.05+|\lambda_{\min}(\bOmega)|)I_p$ to ensure the positive definiteness. 
Moreover, we multiply each diagonal entry $\omega_{ii}$ by an independent random number from the uniform distribution $\mathrm{U}[1,5]$. This induces heterogeneous variances across variables, as commonly encountered in real-world applications. The results for the homogeneous cases are given in Section C.1 of Appendix.
Figure \ref{fig:structure} visualizes the graph structures for the three types of precision matrices.

\begin{figure}[H]
    \centering
    \includegraphics[width=1\textwidth]{./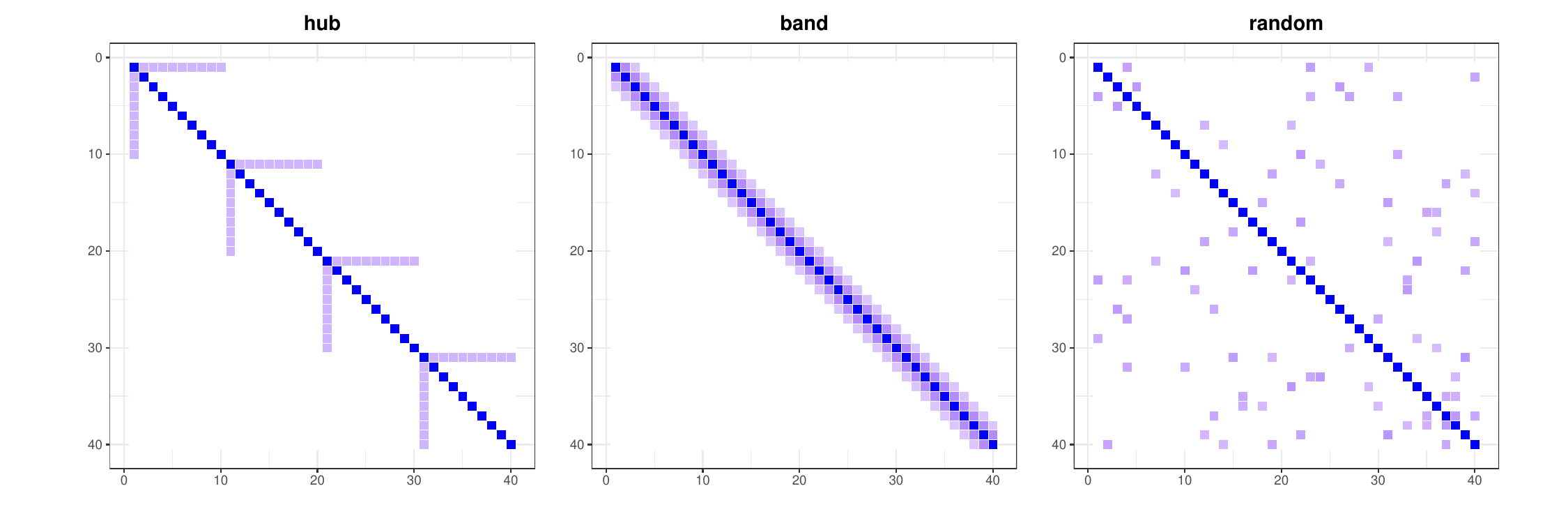}
    \caption{Graph structures for hub (left), band (middle), and Erd\"{o}s-R\'{e}nyi random (right) precision matrices with $p=40$.}
    \label{fig:structure}
\end{figure}

We fix the structure of $\bU^{-1}$ as a band matrix and generate data $\bX$ from $N_{p,q}(\bzero, \bU, \bV)$, while the structure of $\bV^{-1}$ varies among the three types.  We set the sample size to $n=20$ and consider four $(p,q)$ combinations: $(20,20)$, $(20,50)$, $(50,20)$, and $(50,50)$.
For each $(p,q)$ pair and each column graph structure, we generate data $\bX$ and estimate the graph structure using (1) our matrixNS method (``NS"), (2) the GEMINI algorithm \citep{Zhou:2014} (``GEMINI"), (3) and the likelihood flip-flop method \citep{Leng:2012} (``Likelihood"). For the NS method, we standardize the data prior to estimation. In contrast, we use the raw data for the other two, as they directly estimate the precision matrix. We note that standardization for these two methods yields similar results, thus omitted in this paper. Each experimental setting is repeated 100 times.

To evaluate performance, we employ modified ROC curves and AUCs. First, we present ROC curves with the x-axis representing the proportion of zero off-diagonal entries in $\bU^{-1}$ that are nonzero in $\hat{\bU}^{-1}$ (i.e. the false positive rate, FPR) and the y-axis representing the proportion of nonzero off-diagonal entries in $\bU^{-1}$ that are nonzero in $\hat{\bU}^{-1}$ (i.e. the true positive rate, TPR), as the tuning parameter $\lambda$ varies over $2^{-10}, 2^{-9.75}, \cdots, 2^2$. An analogous analysis is performed for $\bV^{-1}$. For the likelihood flip-flop method, we fix the row (or column) tuning parameter at $\lambda=2^{-2}$ when estimating the column (or row) graph, respectively.

In comparing the experimental results, superior performance is indicated by ROC curves closer to the upper-left corner and AUC values approaching 1. However, in typical scenarios, our primary interest lies in cases where the FPR is controlled to be sufficiently low. Hence, we focus on ROC curves for FPR within $[0, 0.15]$, emphasizing performance differences under tight error control (see Section 5.1.2 in \citet{Khare:2015}). The partial AUC is defined by the area under the ROC curve on the range of FPR bewtween 0 and 0.15, after divided by 0.15 to normalizes its scale to be $[0,1]$. We report the partial AUCs and their standard errors on the following graphs for each experiment.

\subsection{Results - comparative methods}

Figure \ref{fig:sim_hetero} presents the ROC curves and the partial AUC values for the three comparative methods. 
Across all settings, our method outperforms the other approaches in estimating edges of the undirected graphs, followed by GEMINI and then Likelihood. However, the results vary depending on the types of graphs. In particular, the difference in AUC is more distinct when the banded structure is estimated; see the top panel that illustrates the performance of row-wise graph estimation and the first row of the bottom panel.
For hub and random structures, there is a modest improvement in the proposed method over GEMINI, but the gap is pronounced compared to Likelihood; see the last two rows of the bottom panel. 

Another notable observation is that the matrix dimensions $p$ and $q$ affect the estimation accuracy, as does the sample size. To be specific, the performance of row-wise graph estimation improves as $q$ grows relative to $p$, and vice versa for column-wise estimation. This supports our theoretical finding regarding the effective sample size, as described in Remark 3 at the end of Section \ref{subsection:method} and in Remark 3 following Theorem \ref{mainthm}.

Finally, we observe that the sample size (i.e. larger $n$) and signal strength (i.e. larger $\rho$ in $\bU^{-1}$ or $\bV^{-1}$) improve estimation accuracy in all scenarios. Due to space limit, these results are provided in Section C.2 and C.3 of Appendix.

%


    


    

\begin{figure}[htbp]
    \centering
    \includegraphics[width=1\textwidth]{./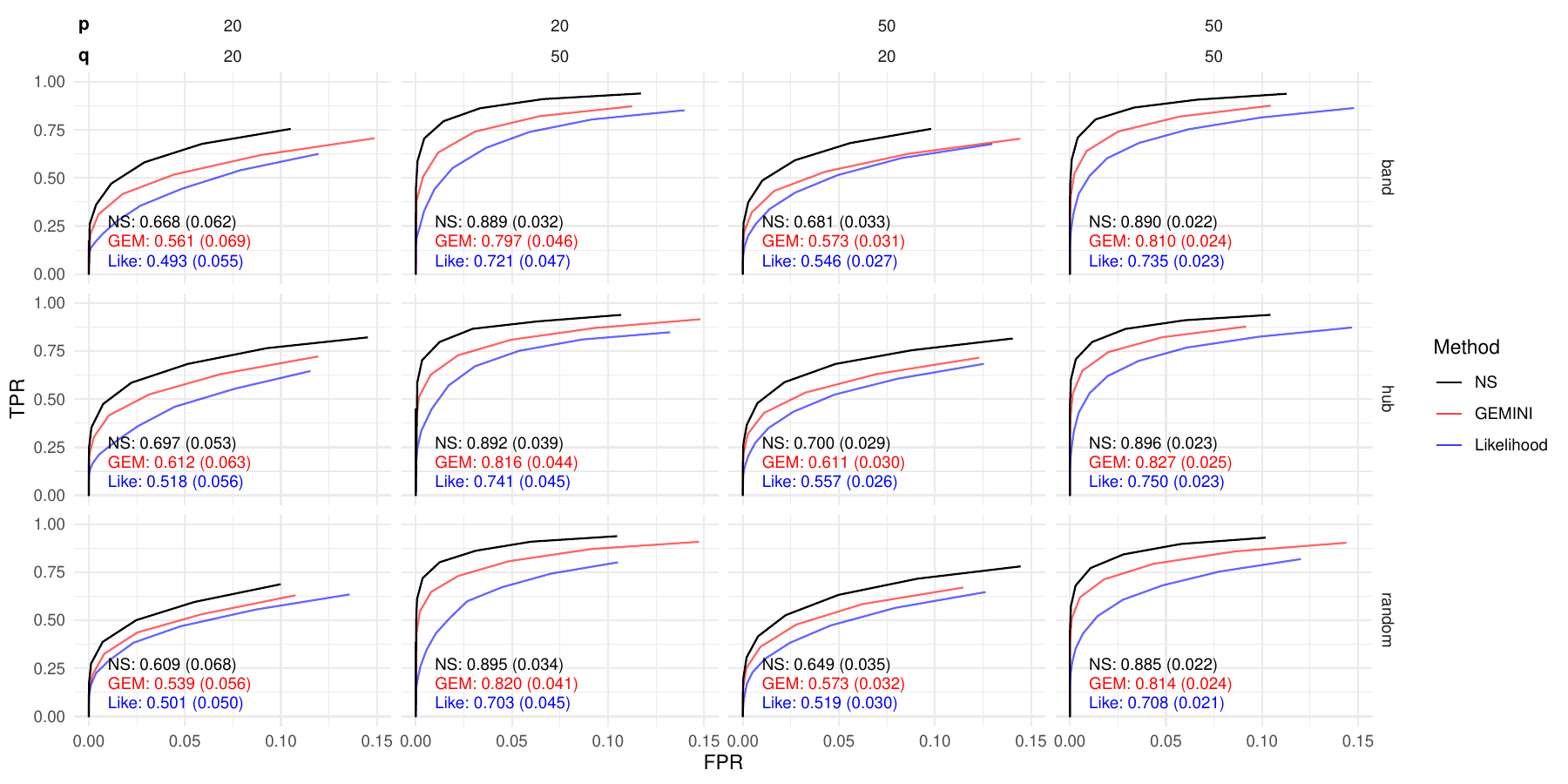}
    \includegraphics[width=1\textwidth]{./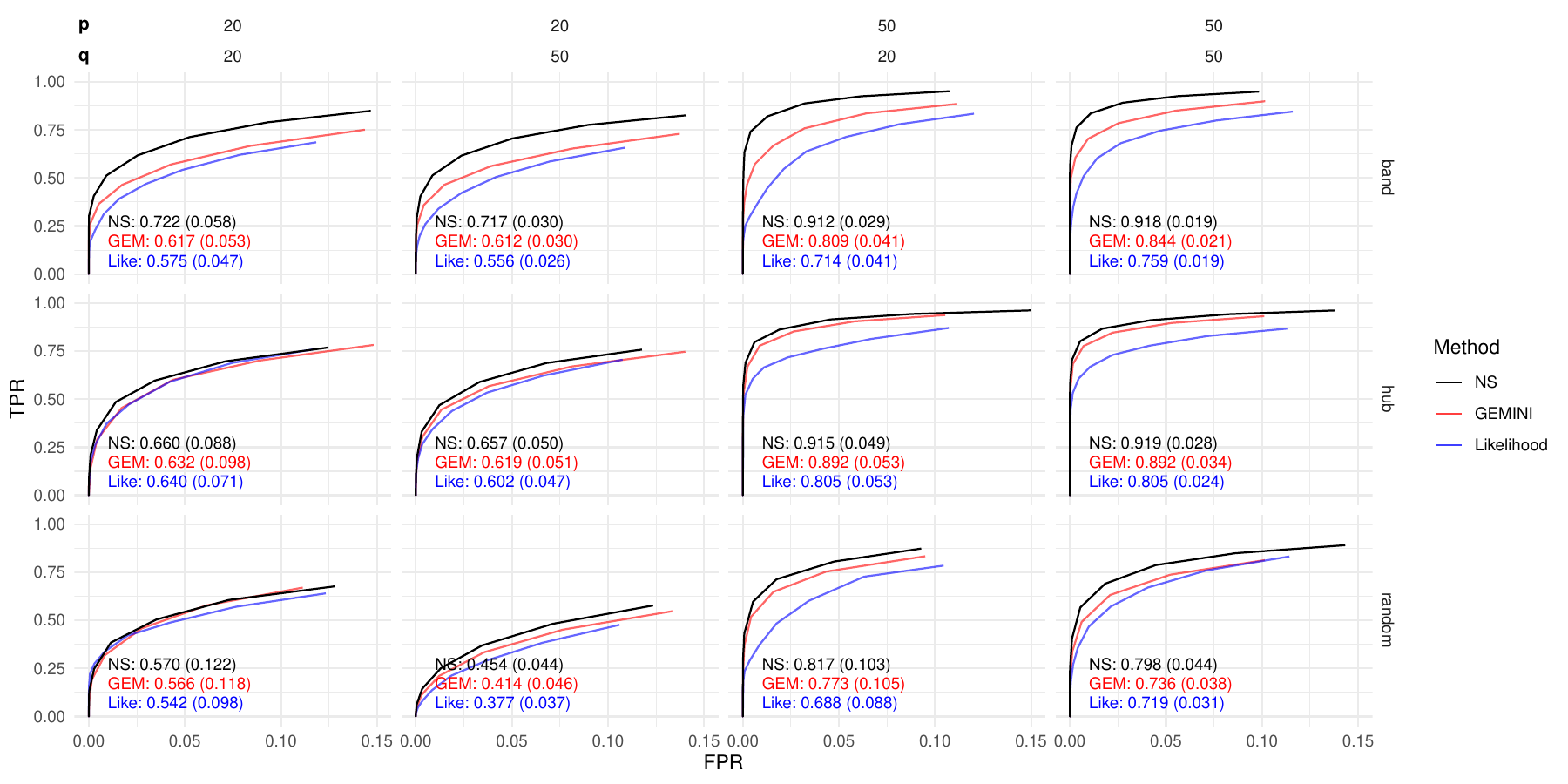}
    \caption{The ROC curves and partial AUC values for the row-wise (top) and column-wise (bottom) graph estimation under heterogeneous variances, comparing the different methods. The horizontal labels represent $(p, q)$ combinations, while the vertical labels indicate true graph structure. Texts within each panel show the partial AUC values and the standard deviation (in parenthesis).}
    \label{fig:sim_hetero}
\end{figure}

\subsection{Results - tuning procedures}\label{subsec:tuning}
We empirically compare two procedures to select the tuning parameters of $\ell_1$-regularization; (1) the ``individual'' procedure that selects a separate hyperparameter for each node-wise regression, and (2) the ``global'' procedure that uses one tuning parameter for all row-wise regressions and another for all column-wise regressions. Note that ``individual'' procedure requires $p+q$ parameters to be tuned, while the ``global'' procedure only has two parameters.

The ``individual'' procedure estimates a row-wise graph using $p$ node-wise regression, each of which is optimally tuned as described in Section \ref{sec:tuning}. Then, we summarize its performance by a single pair of FPR and TPR. On the other hand, we obtain the ROC curve from the ``global'' procedure as in Figure \ref{fig:sim_hetero}. As a result, 
we show the single point from the ``individual'' procedure on the ROC curve for comparison, which is given in Figure \ref{fig:indiv}. 
In all cases, blue dots representing the ``individual'' procedure lie on or below the ROC curve produced by the ``global'' procedure. This result suggests that tuning each regression separately does not necessarily lead to a substantial improvement in performance. Thus, unnecessary computation can be avoided in practice.

\begin{figure}[htbp]
    \centering
    \includegraphics[width=1\textwidth]{./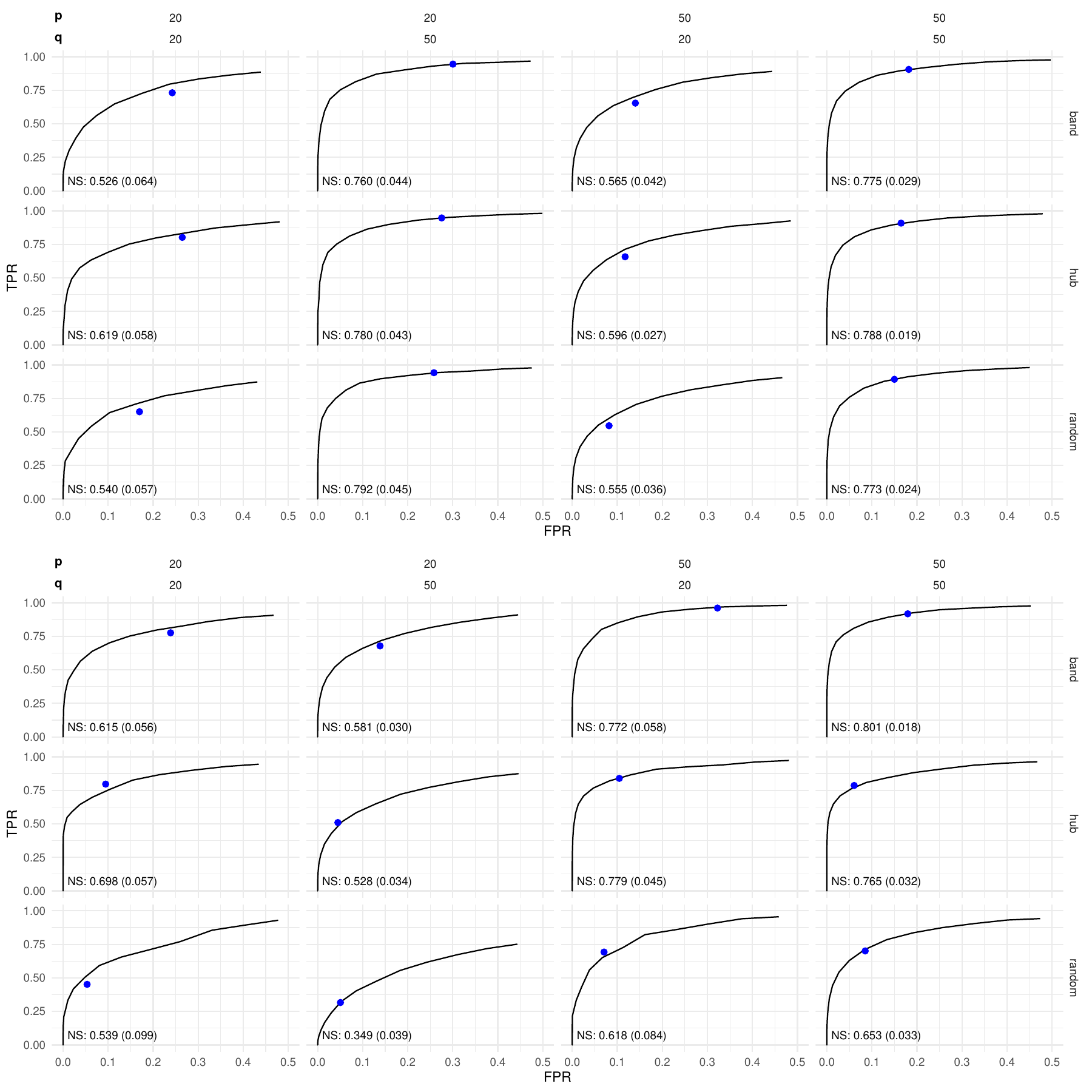}
    \caption{The ROC curves for row-wise (top) and column-wise (bottom) graph estimation. The solid lines represent the ROC curves from the matrixNS method with the ``global'' tuning, while the blue dots indicate the FPR and TPR of the matrixNS method with the ``individual'' tuning.}
    \label{fig:indiv}
\end{figure}

\section{Real data analysis}\label{sec:realdata}

In this section, we analyze an electroencephalography (EEG) dataset from the UCI Machine Learning Repository \citep{begleiter1995eeg} to investigate the underlying structure of brain networks. The data were collected as part of the Collaborative Studies on Genetics of Alcoholism (COGA) project \citep{zhang1995event}, which aimed to explore the relationship between EEG signals and genetic predisposition to alcoholism.

For each subject, the data consist of EEG recordings measured from 64 scalp electrodes, sampled at 256 Hz (i.e., one measurement every 3.9 milliseconds), yielding a matrix observation $p=64$ and $q=256$. Under each stimulus paradigm, recordings were collected across 10 repeated trials. The full dataset contains EEG data for 122 individuals, classified as alcoholic or control group. Here, we demonstrate our method on one control subject under a specific paradigm (ID: \texttt{c\_1\_co2c0000337}), resulting in a setting with $n=10$ trials, $p=64$ electrodes, and $q=256$ time points. Applying matrixNS and GEMINI to this dataset, we estimate the graph structures. For pre-processing, we standardize each of $pq$ variables for matrixNS, while we center them for GEMINI.

Figure \ref{fig:EEGrow} displays the estimated row-wise graphs, which represent the connectivity structure among the $p=64$ electrodes. Node positions follow the Standard Electrode Position Nomenclature \citep{zhang1995event}. We use distinct colors to indicate electrode positions on the left, middle, and right regions of the scalp, following \citet{Zhou:2014}. To ensure a fair comparison between methods, we control the edge density of each estimated graph. Specifically, the connectivity level—defined as the ratio of the number of edges to the total number of possible edges—is approximately fixed at 0.15, 0.10, and 0.05.
\begin{figure}[htbp]
    \centering
    \includegraphics[width=0.8\textwidth]{./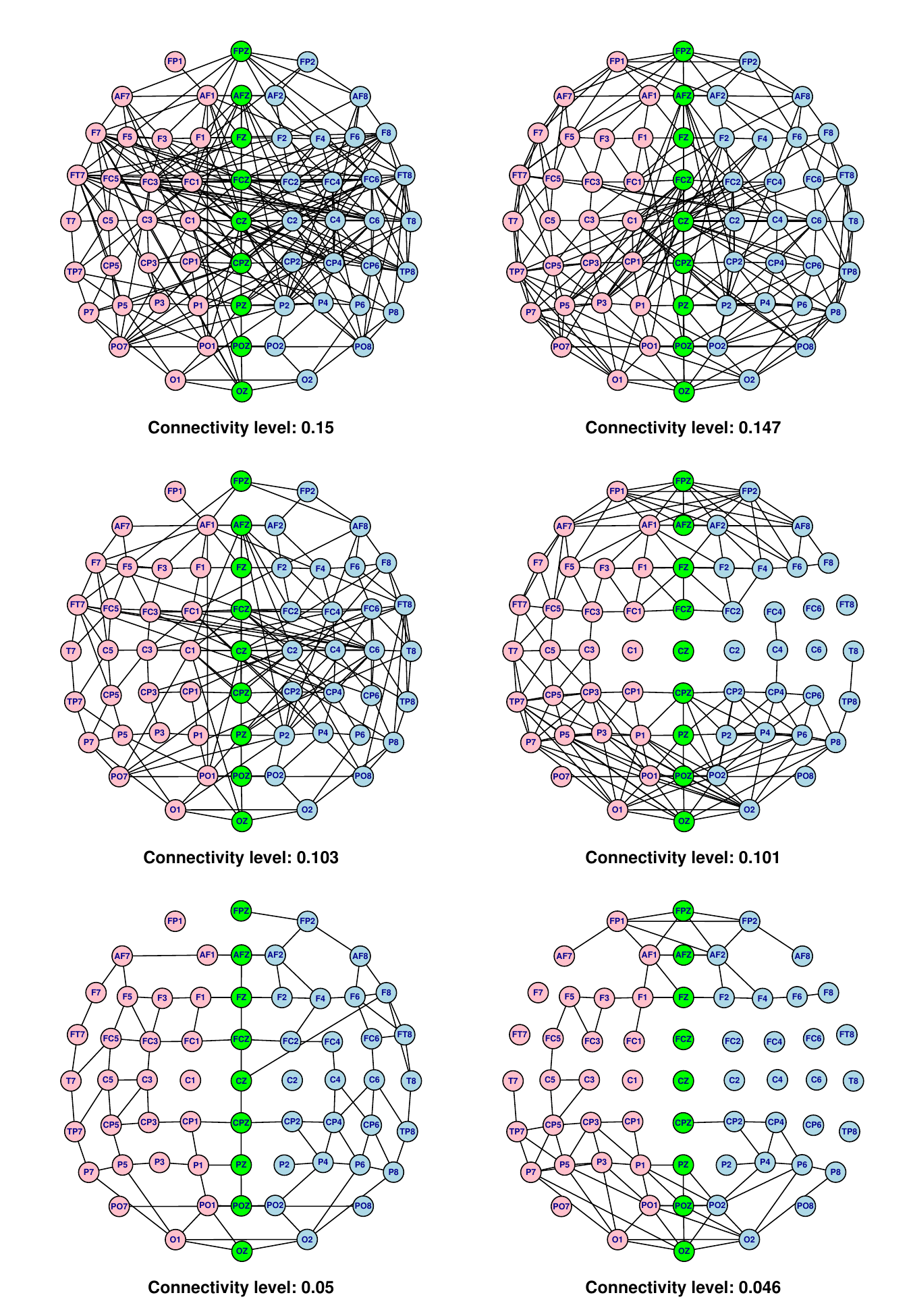}
    \caption{Estimated graphs of electrode connections by matrixNS (left) and GEMINI (right) at connectivity levels of $0.15$ (top), $0.10$ (middle), and $0.05$ (bottom). Red, green, and blue nodes represent electrodes located on the left, middle, and right regions of the scalp, respectively.}
    \label{fig:EEGrow}
\end{figure}
\noindent
We further calculate within‐ and between‐group connectivities of the three spatial regions, which are summarized in Table~\ref{table:sparsity}.

\begin{table}[H]
	\centering
	\begin{tabular}{|c|c|c|}
		\hline
		& matrixNS & GEMINI \\ \hline
		Left - Left & 0.175    & 0.138  \\ \hline
		Middle - Middle & 0.197    & 0.152  \\ \hline
		Right - Right & 0.145    & 0.222  \\ \hline
		Left - Middle & 0.115    & 0.093  \\ \hline
		Middle - Right & 0.087    & 0.103  \\ \hline
		Left - Right & 0.033    & 0.027  \\ \hline
	\end{tabular}
	\caption{Within‐ and between‐region connectivities in the estimated row-wise graphs at the connectivity level of 0.10.}
	\label{table:sparsity}
\end{table}
First, the results show denser connection among spatially adjacent electrodes, suggesting stronger functional associations in nearby regions.
In contrast, the connectivity between the left and right regions is relatively low, likely reflecting their spatial separation.
The proposed method tends to produce graphs with more evenly distributed edges compared to GEMINI. In particular, the graph estimated by GEMINI contains more isolated nodes (see the bottom of Figure \ref{fig:EEGrow}), i.e. nodes not connected to any others. 
In EEG-based analysis, it is often reported that volume conduction -
the passive spread of electrical currents through the brain, skull,
and scalp
- often leads to the observation of densely connected networks, as a
single neural source can be simultaneously detected by multiple
electrodes (\citet{Nolte:2004}).  This may explain why the proposed method yields networks with evenly and densely distributed edges.
On the other hand, we find that the graph estimated by GEMINI contains
a larger number of isolated nodes. However, biologically,
truly isolated cortical regions (nodes) are rarely observed.  For example, it is reported that nearly all cortical areas participate in
frequency-specific oscillatory coupling with others, and this
pervasive connectivity explains that functional isolation is rare in the
healthy
brain (\citet{Hipp:2012}). This suggests that the disconnected nodes
identified by GEMINI may be attributable, at least in part, to
methodological factors rather than true neurophysiological isolation.


\begin{figure}[H]
    \centering
    \includegraphics[width=0.4\textwidth]{./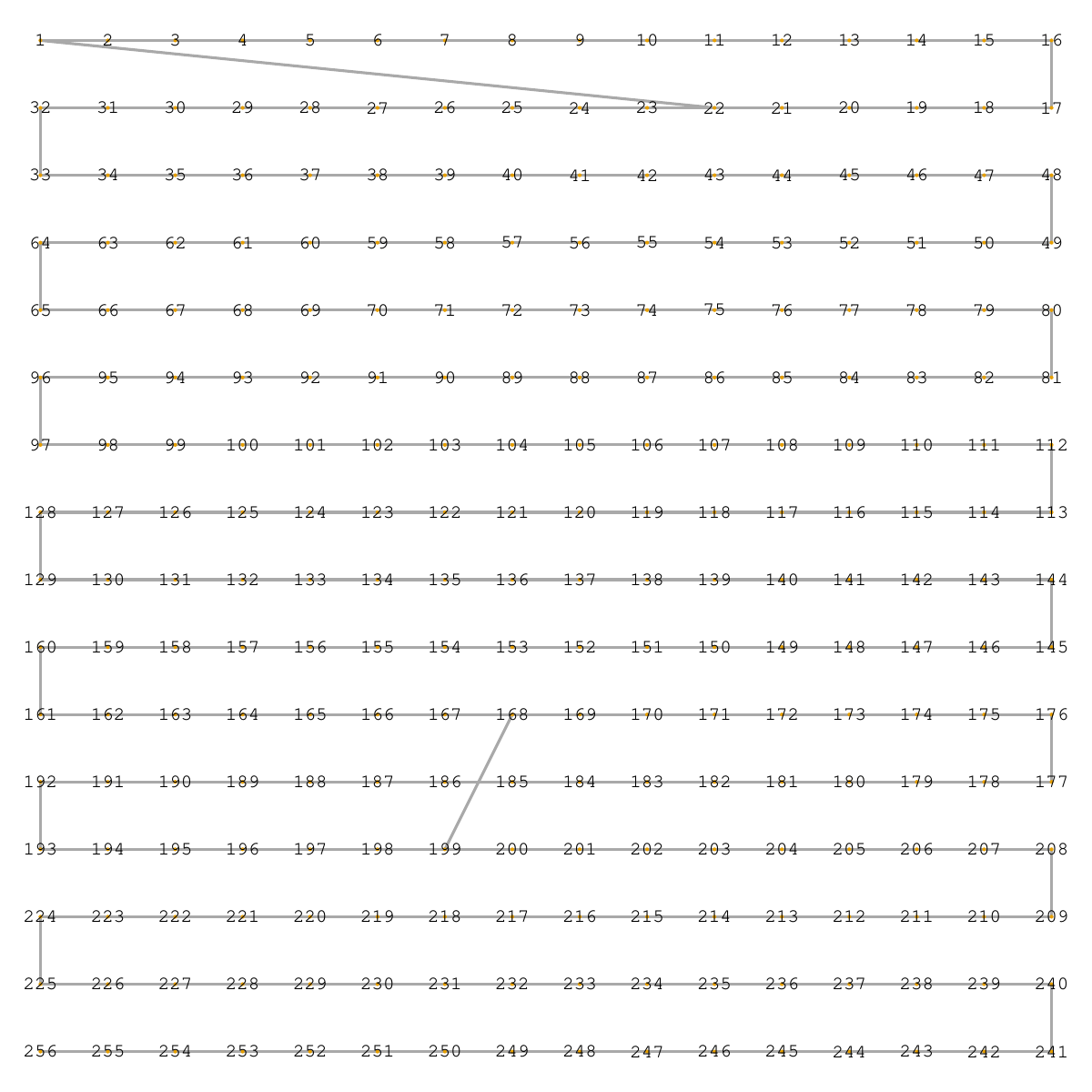}
    \caption{The estimated column-wise graph of temporal measurements using matrixNS with $\lambda = 4.0$. Nodes ($1–256$) correspond to individual measurement epochs.}
    \label{fig:EEGcol}
\end{figure}
\noindent
Figure \ref{fig:EEGcol} visualizes the column-wise graph estimated by matrixNS at $\lambda=4.0$. The network shows mostly contiguous connections across the $q=256$ time points, suggesting that the proposed method successfully captures the underlying temporal correlation structure.

\section{Discussion}\label{sec:discussion}
This paper addresses the problem of estimating the sparse structure of an undirected graph from matrix-variate data. Leveraging the Kronecker product structure of the covariance matrix, we separately estimate the row-wise and column-wise graphs, identifying each node’s neighborhood via $\ell_1$‐penalized regression. Under mild regularity and incoherence conditions, we prove that the estimated graph accurately recovers the true support of edges with high probability. We further demonstrate the practical utility of the proposed method through finite‐sample simulations and an EEG dataset.

Although we have focused on sparse precision matrices for both row and column directions, our approach applies even when the two precision factors have different structures; for example, one may be sparse while the other is low‐rank. 
Indeed, the proposed method can recover the sparsity pattern of one matrix regardless of the structure of the other since each precision component is estimated independently, as shown in \eqref{eq:lassomodel}. Moreover, Theorem \ref{mainthm} depends only on bounds for the largest and average eigenvalues of the non‐target matrix $\bV$. Given that these eigenvalue conditions are commonly used in the literature, it suggests that our theoretical guarantee can be extended to more general separable covariance structures. We believe that further exploration in this direction could yield valuable insights.

Unlike likelihood-based methods, our proposed framework does not provide an explicit estimator for the full precision matrix, as the regression coefficient is only proportional to a column of the inverse covariance matrix (see \eqref{eq:relation_coef_precision}). While our primary focus is on support recovery, we acknowledge this limitation. This issue can be addressed by jointly estimating partial variances along with partial correlations, as in \citet{Peng:2009,Khare:2014}. We leave this extension for future work.

Lastly, our estimation procedure does not rely on the normality assumption, as it employs least squares on the row and column vectors. However, the assumption plays a role in deriving the concentration inequality in Lemma \ref{lem:Gamma_SS_dev}, where we utilize the independence of decorrelated Gaussian components. We believe that this assumption could be relaxed to a sub-Gaussian assumption, such as Condition (A) in \citet{Leng:2018}.
Moreover, since we repeatedly use the conditional distribution of a multivariate Gaussian vectors, an extension to the elliptical distribution family may also be worthwhile.

\section*{Acknowledgment}
Johan Lim was supported by the government of the Republic of Korea (MSIT) and the National Research Foundation of Korea (RS-2025-00520739).
Seongoh Park was supported by the government of the Republic of Korea (MSIT) and the National Research Foundation of Korea (RS-2024-00338876).

\bibliographystyle{apalike}
\bibliography{references}

\newpage
\appendix 
\section{Proof of Theorem \ref{mainthm}}\label{sec:pf_mainthm} 
\begin{proof}
    Fix $a \in [p]$. We have $n$ i.i.d copies of matrix-valued data $\bX^{(1)}, \ldots, \bX^{(n)} \sim \mathcal{MN}_{p, q}(\bzero, \bU,\bV)$, and write
    $(\bX^{(i)})^\top = \left(X^{(i)}_{a}, \bX^{(i)}_{(a)}\right)$.
    Let $\bbX=(\bX^{(1)}, \ldots, \bX^{(n)})^\top = (\bbX_a, \bbX_{(a)}) \in \mathbb{R}^{nq \times p}$ be the vectorized data matrix, where $\bbX_a$ is the $nq \times 1$ vector obtained by stacking the $a$-th rows of each matrix data, and $\bbX_{(a)}$ is the $nq \times (p-1)$ matrix obtained by removing those rows and stack. For notational convenience, set $S=\mathcal{N}(a)$, $\bGamma=\text{tr}(\bV)/q \cdot \bU_{(a),(a)}$, and $\hat{\bGamma}=\bbX_{(a)}^{\top}\bbX_{(a)}/nq$ so that $\mathbb{E}[\hat{\bGamma}]=\bGamma$. As we noted in Section \ref{subsection:method}, we consider the standard lasso estimator: $$\hat{\theta} \in \argmin_{\theta \in \mathbb{R}^{p-1}} \left\{  \frac{1}{2n} \Vert \bbX_a - \bbX_{(a)}\theta \Vert_2^2 +\lambda \Vert \theta \Vert_1\right\}.$$

\subsection{Proof of Assertion (i) of Theorem \ref{mainthm}}
To prove (i), we employ the primal-dual witness argument of \citet{wainwright2019high}. Specifically, we seek an optimal primal-dual pair $(\hat{\theta}, \hat{z})$ that satisfies $\bbX_{(a)}^\top (\bbX_{(a)} \hat{\theta}-\bbX_a)/n+\lambda\hat{z} = 0$. 
A straightforward calculation shows 
\begin{align*}
\hat{z}_{S^c} = \underbrace{\bbX^\top_{S^c}\bbX_S(\bbX^\top_S \bbX_S)^{-1}\hat{z}_S}_{=:\ \mu} + \underbrace{\bbX^\top_{S^c}(\text{I}-\bPi_S) \frac{\epsilon}{nq \lambda}}_{=:\ T_{S^c}}
\end{align*}
where $\hat{z}^\top = (\hat{z}_S^\top, \hat{z}_{S^c}^\top)$ and $\bPi_S=\bbX_S(\bbX_S ^\top \bbX_S)^{-1}\bbX_S^\top$.
    
We first claim that $\Vert \hat{z}_{S^c} \Vert _{\infty} < 1$ holds with high probability. Lemma 7.23 in \citet{wainwright2019high} states that if the minimum eigenvalue of $\hat{\bGamma}_{SS}=\bbX_S^\top \bbX_S/nq$ is bounded away from zero, then $\Vert \hat{z}_{S^c} \Vert _{\infty} < 1$ implies $\hat{\theta}_{S^c}=0$, which in turn gives $\widehat{\mathcal{N}}(a) \subseteq \mathcal{N}(a)$. Then we complete the proof of (i) by applying a union bound over all $a \in [p]$, as will be discussed later.
    
\subsubsection*{Probabilistic Bound of $\mu$}
Since $(\bX_{S,j}^\top, \bX_{S^c,j}^\top)^\top \sim \mathcal{N}(0,v_{jj}\bU_{(a),(a)})$ for every $j \in [q]$, we can write 
$$
\bX^{(i)}_{S^c,j}=\bGamma_{S^cS}(\bGamma_{SS})^{-1}\bX^{(i)}_{S,j} + \Tilde{W}^{(i)}_{S^c,j} \in \mathbb{R}^{p-1-s}.
$$
By stacking these vectors appropriately, we obtain $\bX^{(i)}_{S^c}=\bGamma_{S^cS}(\bGamma_{SS})^{-1}\bX^{(i)}_{S} + \Tilde{\bW}^{(i)}_{S^c}$,
and hence
\begin{align}\label{decomposition}
	\bbX^\top_{S^c}=\bGamma_{S^cS}(\bGamma_{SS})^{-1}\bbX^\top_S + \Tilde{\bW}^\top_{S^c},
\end{align}
where $\Tilde{\bW}_{S^c} \in \mathbb{R}^{nq \times (p-1-s)}$ is a zero-mean Gaussian random matrix independent of $\bbX_S$. Substituting \eqref{decomposition} gives
\begin{align*}      
    \Vert \mu \Vert _{\infty}&\leq  \Vert \bGamma_{S^cS}(\bGamma_{SS})^{-1}\hat{z}_S \Vert _{\infty}+ \left\lVert \frac{\Tilde{\bW}^\top_{S^c}\bbX_S}{nq}(\hat{\bGamma}_{SS})^{-1}\hat{z}_S \right\rVert _{\infty}\\
	&\leq (1-\alpha) + \Biggl\lVert  \underbrace{\frac{\Tilde{\bW}^\top_{S^c}\bbX_S}{nq}(\hat{\bGamma}_{SS})^{-1}\hat{z}_S}_{=:R} \Biggr\rVert_{\infty}
 \end{align*}
by Assumption \ref{assum:alpha} (recall the definition of $\bGamma$).

Since $\Tilde{\bW}^\top_{S^c} \indep \bbX_S$, the conditioned $R | \bbX_S$ follows the mean zero Gaussian distribution.
Let $r_k=e_k^\top R$ denote the $k$-th entry of $R$ for each $k \in S^c$, and define $b_S:=\bbX_S(\hat{\bGamma}_{SS})^{-1} \hat{z}_S / \sqrt{nq}$. 
Then since $r_k=e_k^\top R= \Tilde{W}_k^\top b_S/\sqrt{nq}$, we have
\begin{align*}
	\text{Var}(r_k | \bbX_S)=\frac{1}{nq}\text{Var}(b_S^{\top} \Tilde{W}_k)=\frac{1}{nq}b_S^{\top} \text{Var}(\Tilde{W}_k) b_S
	\leq \frac{1}{nq} \Vert b_S \Vert _2^2  \Vert \text{Var}(\Tilde{W}_k) \Vert _2.
\end{align*}
Now observe that $\Tilde{W}_k = (\Tilde{W}_{k,1}^{(1)}, \ldots, \Tilde{W}_{k,q}^{(1)}, \ldots, \Tilde{W}_{k,1}^{(n)}, \ldots, \Tilde{W}_{k,q}^{(n)})^\top$. Also, one can verify that 
$$
\text{Cov}(\Tilde{W}_{k,l}^{(i)}, \Tilde{W}_{k,m}^{(j)})=\delta_{ij}(u_{kk}-\bU_{kS}\bU_{SS}^{-1}\bU_{Sk})v_{lm}
$$ 
where $\delta_{ij}$ denotes Kronecker delta. Combining these facts, we obtain 
$$
\text{Var}(\Tilde{W}_k)=(u_{kk}-\bU_{kS}\bU_{SS}^{-1}\bU_{Sk})\cdot \diag_{n}(\bV).
$$
Since $\bU_{SS}^{-1}$ is positive definite, it follows $u_{kk}-\bU_{kS}\bU_{SS}^{-1}\bU_{Sk} \leq u_{kk}$. Moreover, $\bV$ and $\diag_n(\bV)$ share the same eigenvalues. Taken together, we conclude that $ \Vert \text{Var}(\Tilde{W}_k) \Vert _2 \leq  u_{kk} \Vert \bV \Vert _2$.

Now the uniform upper bound for the conditional variances holds for all $k \in [p]$:
\begin{align}
	\sqrt{\text{Var}(r_k|\bbX_S)} &= \sqrt{\frac{ u_{kk} \Vert \bV \Vert _2}{nq}} \left\lVert \frac{\bbX_S}{\sqrt{nq}}\hat{\bGamma}_{SS}^{-1}\hat{z}_S \right\rVert _2 \nonumber\\
	&\leq \sqrt{\frac{ u_{kk} \Vert \bV \Vert _2}{nq}} \left\lVert \frac{\bbX_S}{\sqrt{nq}}\hat{\bGamma}_{SS}^{-1} \right\rVert _2\cdot  \Vert \hat{z}_S \Vert _2 \nonumber\\
	&\leq \sqrt{\frac{ u_{kk} \Vert \bV \Vert _2}{nq}}\sqrt{ \Vert \hat{\bGamma}_{SS}^{-1} \Vert _2}\cdot \sqrt{d_{\max}} \label{vb3}\\
	&\leq \sqrt{\frac{2d_{\max}}{nq \lambda_{\min}^U v_{\text{avg}}}\cdot u_{\max} \Vert \bV \Vert _2}. \label{vb4}
\end{align}
Inequality \eqref{vb3} follows from Assumption \ref{assum:sparse}, while \eqref{vb4} holds with high probability by the following fact:
\begin{fact}\label{fact:minimumeigenvalue}
    Let $E_{1a} : = \{ \Vert \hat{\bGamma}_{SS} - {\bGamma}_{SS} \Vert _2 \le \lambda_{\min}^Uv_{\text{avg}}/2\}$. On $E_{1a}$, we have
    \begin{align}\label{eq:minimumeigenvalue}
        \Vert \hat{\bGamma}_{SS}^{-1} \Vert _2 \leq \frac{2}{\lambda_{\min}^Uv_{\text{avg}}}.
    \end{align}
    Moreover, $E_{1a}$ holds with probability at least $1-3d_{\max}/p^\beta$ for any $beta>0$, under either condition \eqref{eq:dim_cond1} or \eqref{eq:dim_cond2}.
\end{fact}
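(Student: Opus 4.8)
The plan is to separate the claim into a deterministic matrix-perturbation part (inequality \eqref{eq:minimumeigenvalue}) and a probabilistic concentration part (the high-probability statement for $E_{1a}$), and to handle them independently.

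For the deterministic part, I would first lower-bound $\lambda_{\min}(\bGamma_{SS})$. Since $\bGamma_{SS}=(\text{tr}(\bV)/q)\,\bU_{SS}=v_{\text{avg}}\,\bU_{SS}$ and $\bU_{SS}$ is a principal submatrix of $\bU$, eigenvalue interlacing gives $\lambda_{\min}(\bU_{SS})\ge\lambda_{\min}(\bU)\ge\lambda_{\min}^U$, hence $\lambda_{\min}(\bGamma_{SS})\ge v_{\text{avg}}\lambda_{\min}^U$. On the event $E_{1a}$, Weyl's inequality then yields
\[
\lambda_{\min}(\hat{\bGamma}_{SS})\;\ge\;\lambda_{\min}(\bGamma_{SS})-\Vert\hat{\bGamma}_{SS}-\bGamma_{SS}\Vert_2\;\ge\;v_{\text{avg}}\lambda_{\min}^U-\frac{v_{\text{avg}}\lambda_{\min}^U}{2}\;=\;\frac{v_{\text{avg}}\lambda_{\min}^U}{2}\;>\;0,
\]
so $\hat{\bGamma}_{SS}$ is positive definite and $\Vert\hat{\bGamma}_{SS}^{-1}\Vert_2=1/\lambda_{\min}(\hat{\bGamma}_{SS})\le 2/(v_{\text{avg}}\lambda_{\min}^U)$, which is exactly \eqref{eq:minimumeigenvalue}. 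This step is routine.

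For the probabilistic part, the key observation is that $\hat{\bGamma}_{SS}=\bbX_S^\top\bbX_S/(nq)$ is the sample second-moment matrix of the $nq$ within-subject-dependent Gaussian vectors $\bX^{(i)}_{S,j}\sim\mathcal{N}(0,v_{jj}\bU_{SS})$ over $i\in[n]$, $j\in[q]$, with $\mathbb{E}[\hat{\bGamma}_{SS}]=\bGamma_{SS}$ and ambient dimension $s:=|S|=|\mathcal{N}(a)|\le d_{\max}$ by Assumption~\ref{assum:sparse}. I would therefore apply the non-asymptotic spectral-norm deviation bound of Lemma~\ref{lem:Gamma_SS_dev}, specialized to these $s$-dimensional sub-vectors, with target deviation level $\lambda_{\min}^U v_{\text{avg}}/2$. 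Under Assumption~\ref{assum:eigen} the relevant covariance parameters are controlled by $u_{\max}$, $\lambda_{\min}^U$, $\lambda_{\max}^V$, and $v_{\text{avg}}$, and conditions \eqref{eq:dim_cond1} and \eqref{eq:dim_cond2} are precisely the two sample-size/dimension regimes under which Lemma~\ref{lem:Gamma_SS_dev} makes this deviation level admissible and certifies it with probability at least $1-3s/p^{\beta}\ge 1-3d_{\max}/p^{\beta}$.

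The main obstacle is the bookkeeping in the probabilistic step: one must trace exactly how the eigenvalue constants ($\lambda_{\max}^V$, $v_{\text{avg}}$, $u_{\max}$, $\lambda_{\min}^U$) and the degree bound $d_{\max}$ enter the Hanson--Wright-based estimate underlying Lemma~\ref{lem:Gamma_SS_dev}, and then verify that \eqref{eq:dim_cond1}/\eqref{eq:dim_cond2} are exactly strong enough to force the failure probability below $3d_{\max}/p^{\beta}$ at deviation level $\lambda_{\min}^U v_{\text{avg}}/2$. The within-subject correlation among the $nq$ vectors (the dependence on $\bV$) has already been absorbed into Lemma~\ref{lem:Gamma_SS_dev}, so no fresh probabilistic argument is needed here beyond this substitution; the rest is straightforward.
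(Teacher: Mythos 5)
Your proposal is correct and follows essentially the same route as the paper: the deterministic bound via interlacing of $\lambda_{\min}(\bU_{SS})$ plus Weyl's inequality, and the probabilistic bound by invoking Lemma~\ref{lem:Gamma_SS_dev} at deviation level $\lambda_{\min}^U v_{\text{avg}}/2$ with $|S|\le d_{\max}$. The only step you leave implicit is the case analysis over the three terms in the exponent of Lemma~\ref{lem:Gamma_SS_dev} (which the paper carries out under the side condition $n(2d_{\max}/q+1)>\Vert\bV\Vert_2^2$) to show that \eqref{eq:dim_cond1} or \eqref{eq:dim_cond2} indeed forces the failure probability below $3d_{\max}/p^{\beta}$, and this is exactly the bookkeeping you flag.
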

\noindent
The proof of Fact~\ref{fact:minimumeigenvalue} is deferred to Section \ref{sec:pf_minimumeigenvalue}. As a consequence, by applying the Gaussian tail property and a union bound, we have
\begin{align*}
    &\mathbb{P}( \Vert R \Vert _{\infty} \geq t) 
    \leq \mathbb{P}(\max_{k \in S^c} \vert r_k \vert  \geq t |\ E_{1a}) + \mathbb{P}(E_{1a}^c)\\
    &\leq 2|S^c|\exp \left( -\frac{nq \lambda_{\min}^U v_{\text{avg}} t^2}{4d_{\max} \cdot u_{\max} \Vert \bV \Vert _2} \right) +\mathbb{P}(E_{1a}^c)\\
    &\leq 2p \exp \left(-\frac{nq\lambda_{\min}^U v_{\text{avg}} t^2}{4d_{\max}\cdot u_{\max} \Vert \bV \Vert _2}\right)+\mathbb{P}(E_{1a}^c).
\end{align*}
For any fixed $\beta > 1$, by taking $t=\alpha/4$, we have
\begin{align*} 
    \mathbb{P}\left( \Vert R \Vert _{\infty} \geq \frac{\alpha}{4} \ |\ E_{1a}\right) \leq 2p \exp \left( -\frac{\lambda_{\min}^Uv_{\text{avg}} nq \alpha^2}{64d_{\max}\cdot u_{\max} \Vert \bV \Vert _2 }\right)  \leq \frac{2}{p^{\beta}},
\end{align*}
provided that 
\begin{align} \label{eq:dim1}
    \frac{nq}{\log p} \geq (1+\beta)\frac{64d_{\max} \cdot u_{\max} \Vert \bV \Vert _2}{\lambda_{\min}^Uv_{\text{avg}} \alpha^2}.
\end{align}
Combining everything, we conclude
\begin{align}\nonumber
	\mathbb{P}\left( \Vert \mu \Vert _{\infty} \geq 1-\frac{3}{4}\alpha \right) \leq\  \mathbb{P}\left( \Vert R \Vert _{\infty} \geq \frac{\alpha}{4}\right) 
    \leq \frac{3d_{\max}+2}{p^{\beta}},
\end{align}
as long as either (i) \eqref{eq:dim1} and \eqref{eq:dim_cond1}, or (ii) \eqref{eq:dim1} and \eqref{eq:dim_cond2} are satisfied.



\subsubsection*{Probabilistic Bound of $T_{S^c}$}

Recall that $T_{S^c}=\bbX^\top_{S^c}(\text{I}-\bPi_S)\epsilon/(nq \lambda)$. Using the decomposition \eqref{decomposition} and the fact $\bbX_S^\top(\text{I}-\bPi_S)=0$, we have $T_{S^c}=\Tilde{\bW}_{S^c}^\top(\text{I}-\bPi_S)\epsilon/(nq \lambda)$.
To apply the same arguments leading to \eqref{vb4}, we need a probabilistic bound on $ \Vert (\text{I}-\bPi_S)\epsilon \Vert _2$. Since $\text{I}-\bPi_S$ is a projection, we get $\Vert (\text{I}-\bPi_S)\epsilon \Vert _2 \leq  \Vert \epsilon \Vert _2$. Therefore, it remains to bound $\mathbb{P}( \Vert \epsilon \Vert _2 \geq t)$.

By the model assumption, for each $i \in [n], j \in [q]$, we have $X^{(i)}_{a,j}=X^{(i)}_{(a),j}\theta^*+\epsilon^{(i)}_j$. Since $X^{(i)}_j \sim \mathcal{N}_q(0,v_{jj}\bU)$, one can check $\text{Var}(\epsilon^{(i)}_j)=v_{jj}(u_{aa}-\bU_{a,(a)}(\bU_{(a),(a)})^{-1}\bU_{(a),a})$, or equivalently
$$
\text{Var}(\epsilon^{(i)})=(u_{aa}-\bU_{a,(a)}(\bU_{(a),(a)})^{-1}\bU_{(a),a})\bV =: \bV^{*}.
$$
Thus we can write $\epsilon^{(i)}\overset{d}{=}{\bV^*}^{1/2}Z^{(i)}$  with 
$Z^{(i)} \overset {\mathrm{i.i.d.}}{\sim} \mathcal{N}_q(0,I_q)$. Let
$\Tilde{Z}=(Z^{(1)},\dots,Z^{(n)})^\top$ and
$\Tilde{\bV}^*=\diag_n(\bV^*)$, then the squared norm becomes
\begin{align*}
	\Vert \epsilon \Vert _2^2 = \sum_{i=1}^n ({\epsilon^{(i)}})^\top \epsilon^{(i)}
	{\buildrel d \over =} \sum_{i=1}^n ({Z^{(i)}})^\top \bV^{*} Z^{(i)} 
	= \Tilde{Z}^\top \Tilde{\bV}^* \Tilde{Z}.
\end{align*}
By the Hanson–Wright inequality \citep{rudelson2013hanson}, for any $s>0$,
\begin{align}\label{eq:HWineq}
	\mathbb{P}( \Vert \epsilon \Vert _2^2 \geq \text{tr}(\Tilde{\bV}^*)+s) \leq&\ 2 \exp\Bigl( -c \min\bigl(\frac{s}{ \Vert \Tilde{\bV^*} \Vert _2}, \frac{s^2}{ \Vert \Tilde{\bV^*} \Vert _F^2}\bigr)\Bigr)  \\
    \leq &\ 2 \exp\Bigl( -c \min\bigl(\frac{s}{ u_{aa} \Vert \bV \Vert _2}, \frac{s^2}{n q u_{aa}^2  \Vert \bV \Vert_2^2}\bigr)\Bigr), \nonumber
\end{align}
where we used $u_{aa}-\bU_{a,(a)}(\bU_{(a),(a)})^{-1}\bU_{(a),a} \leq u_{aa}$ and $\Vert \bV \Vert _F^2 \leq q \Vert \bV \Vert_2^2$.
Setting $s=nq u_{aa} v_{\text{avg}}$ and since $\text{tr}(\Tilde{\bV^*}) \leq nu_{aa}\text{tr}(\bV) = nq u_{aa} v_{\text{avg}}$ and 
$v_{\text{avg}} \leq \Vert \bV \Vert_2$, we get 
$$
\mathbb{P}\bigl(\|\epsilon\|_2 \ge \sqrt{2nq\,u_{aa}\,v_{\text{avg}}} \bigr)
 \leq\ 2 \exp\Bigl( -c \cdot nq \min\left(\frac{v_{\text{avg}}}{ \Vert \bV \Vert _2}, \frac{v_{\text{avg}}^2}{ \Vert \bV \Vert _2^2}\right)\Bigr)\\
\le 2 \exp\Bigl( -c \cdot nq \frac{v_{\text{avg}}^2}{ \Vert \bV \Vert _2^2} \Bigr).
$$
Therefore, the event $E_{2a}:=\{ \Vert \epsilon \Vert _2 \leq \sqrt{2nq\, u_{aa}\, v_{\text{avg}}}\}$ satisfies $\mathbb{P}(E_{2a}^c) \leq 2/p^{\beta}$ whenever
\begin{align} \label{eq:dim2}
    \frac{nq}{\log p} \geq \frac{\beta\, \Vert \bV \Vert _2^2}{c\, v_{\text{avg}}^2}.
\end{align}

Since $\Tilde{\bW}_{S^c}^\top$ and $\epsilon$ are independent, $T_{S^c}|\epsilon$ follows the centered normal distribution. Denote its $k$\-th component by $t_k := e_k^\top T_{S^c}=\Tilde{W}_k^\top(\text{I}-\bPi_S)\epsilon/(nq\lambda)$ . On $E_{2a}$, its conditional variance satisfies
\begin{align*}
	\text{Var}(t_k|\epsilon)=&\ \frac{1}{(nq\lambda)^2}((\text{I}-\bPi_S)\epsilon)^\top \text{Var}(\Tilde{W}_k) (\text{I}-\bPi_S) \epsilon
	\leq\ \frac{1}{(nq\lambda)^2} \Vert (\text{I}-\bPi_S)\epsilon \Vert _2^2  \Vert \text{Var}(\Tilde{W}_k) \Vert _2\\
	\leq&\ \frac{1}{(nq\lambda)^2} \Vert \epsilon \Vert _2^2 u_{kk} \Vert \bV \Vert _2
	\leq\ \frac{2}{nq\lambda^2} u_{aa}u_{kk} v_{\text{avg}} \Vert \bV \Vert _2.
\end{align*}
By the Gaussian tail bound and a union bound, we obtain
$$
\mathbb{P}( \Vert T_{S^c} \Vert _{\infty} \geq t) \leq \mathbb{P} \bigl(\max_{k \in S^c} \vert t_k \vert  \geq t \ |\ E_{2a} \bigr) + \mathbb{P}(E_{2a}^c) \leq 2|S^c|\exp \left(-\frac{nq\lambda^2 t^2}{4 u_{\max}^2 v_{\text{avg}} \Vert \bV \Vert _2} \right)+\mathbb{P}(E_{2a}^c).
$$
Setting $t=\alpha/4$ , we get
\begin{align*}
    \mathbb{P}\left( \Vert T_{S^c} \Vert _{\infty} \geq \frac{\alpha}{4} \right) \leq 2p \exp \left(-\frac{nq\lambda^2 \alpha^2}{64 u_{\max}^2 v_{\text{avg}} \Vert \bV \Vert _2} \right) +\mathbb{P}(E_{2a}^c).
\end{align*}
Finally, if \eqref{eq:dim2} and 
\begin{align} \label{lambdacond1}
        \lambda \geq \frac{1}{\alpha}\sqrt{\frac{\log p}{nq}(1+\beta)\cdot 64u_{\max}^2 v_{\text{avg}} \Vert \bV \Vert _2},
\end{align}
holds, we get $\mathbb{P}( \Vert T_{S^c} \Vert _{\infty} \geq \alpha/4) \le 4/p^\beta$.

\subsubsection*{Combining Pieces Together}
Putting everything together, we deduce that for any $\beta>1$,
\begin{align}\label{eq:zbound}
        \mathbb{P}\left( \Vert \hat{z}_{S^c} \Vert_\infty \geq 1-\frac{\alpha}{2}\right) \leq \frac{3d_{\max}+6}{p^\beta}
\end{align}
provided that either (i) \eqref{eq:dim1} and \eqref{eq:dim_cond1} or (ii) \eqref{eq:dim1} and \eqref{eq:dim_cond2} hold together with \eqref{lambdacond1}.
As we have already proved that the minimum eigenvalue of $\hat{\bGamma}_{SS}$ is bounded away from zero on $E_{1a}$ in \eqref{eq:minimumeigenvalue}, it follows that
\begin{align*}
        \mathbb{P}\left( \widehat{\mathcal{N}}(a) \subseteq \mathcal{N}(a) \right) \geq 1-\frac{3d_{\max}+6}{p^\beta},
\end{align*}
under the same assumptions. 
Finally, taking a union bound over $a=1, \ldots, p$, we have $\hat{E} \subseteq E$ with probability larger than $1-(3d_{\max}+6)/p^{\beta-1}$, which completes the proof.

\subsubsection{Proof of Fact \ref{fact:minimumeigenvalue}}\label{sec:pf_minimumeigenvalue}
First, observe that
\begin{align} \label{b_bound}
    \Vert {\bGamma}_{SS}^{-1} \Vert _2 =\ \lambda_{\min}({\bGamma}_{SS})^{-1}
    =\ \frac{q}{\text{tr}(\bV)}\cdot\lambda_{\min}(\bU_{SS})^{-1}
    \leq \frac{1}{v_{\text{avg}} \lambda_{\min}^U}
\end{align}
Next, by using Weyl's inequality, we get
$$
\left| \frac{1}{ \Vert \hat{\bGamma}_{SS}^{-1} \Vert _2} - \frac{1}{ \Vert {\bGamma}_{SS}^{-1} \Vert _2} \right| =|\lambda_{\min}(\hat{\bGamma}_{SS}) - \lambda_{\min}({\bGamma}_{SS})| \le  \Vert \hat{\bGamma}_{SS} - {\bGamma}_{SS} \Vert _2.
$$
Thus, on the event $E_{1a} = \{ \Vert \hat{\bGamma}_{SS} - {\bGamma}_{SS} \Vert _2 \le \lambda_{\min}^Uv_{\text{avg}}/2\}$, we get
$$
\frac{1}{ \Vert \hat{\bGamma}_{SS}^{-1} \Vert _2} \ge \frac{1}{ \Vert {\bGamma}_{SS}^{-1} \Vert _2}  - \Vert \hat{\bGamma}_{SS} - {\bGamma}_{SS} \Vert _2 \ge \frac{\lambda_{\min}^Uv_{\text{avg}}}{2},
$$
which proves \eqref{eq:minimumeigenvalue}.

We now show $E_{1a}^c$ holds with high probability. By Lemma \ref{lem:Gamma_SS_dev} with $t_0 = \lambda_{\min}^U v_{\text{avg}}/(2C ||\bU||_2)$, one obtains
$$
{\rm P} \left[
\Vert \hat{\bGamma}_{SS} - {\bGamma}_{SS} \Vert _2 
\geq \lambda_{\min}^Uv_{\text{avg}} / 2 \right] \le 3d_{\max} \exp \bigg[
- \min \Big\{
(t_0 / a_1)^2, t_0 / a_2, (t_0 / a_3)^{2/3}
\Big\}
\bigg].
$$
where $a_1, a_2, a_3$ are as in Lemma \ref{lem:Gamma_SS_dev}. The right hand side is at most $3d_{\max}/p^\beta$ if one of the following holds:
\begin{enumerate}[label=(\roman*)]
	\item[(i)] $1/a_1 \ge (\beta \log p)^{1/2}/t_0$ and $t_0 \le \min\{ a_1^2/a_2, a_1^{3/2} / a_3^{1/2}\}$,
	\item[(ii)] $1/a_2 \ge \beta \log p/t_0$ and $ a_1^2/a_2 \le t_0 \le a_2^3/a_3^2$,
	\item[(iii)] $1/a_3 \ge (\beta \log p)^{3/2}/t_0$ and $t_0 \ge \max\{a_1^{3/2} / a_3^{1/2},  a_2^3/a_3^2\}$.
\end{enumerate}
Under the mild assumption $n (2d_{\max}/q + 1) > ||\bV||_2^2$, then all the three terms 
$$ \frac{a_1^{3/2} / a_3^{1/2}}{a_2^3/a_3^2} =  \left(\dfrac{n (2d_{\max}/q + 1)}{||\bV||_2^2} \right)^{3/4}, \ 
	\frac{a_1^2/a_2}{a_1^{3/2} / a_3^{1/2}} = \left(\dfrac{n (2d_{\max}/q + 1)}{||\bV||_2^2} \right)^{1/4}, 
	\ \frac{a_1^2/a_2}{a_2^3/a_3^2} = \dfrac{n (2d_{\max}/q + 1)}{||\bV||_2^2}
$$
are greater than $1$. Consequently, the conditions (i)-(iii) reduce to:
\begin{enumerate}[label=(\alph*)]
	\item[(a)] 
	$1/a_1 \ge (\beta \log p)^{1/2}/t_0$ and $t_0 \le a_1^{3/2} / a_3^{1/2}$,
	\item[(b)] $1/a_3 \ge (\beta \log p)^{3/2}/t_0$ and $t_0 \ge a_1^{3/2} / a_3^{1/2}$.
\end{enumerate}
The sufficient condition of (a) is that 
\begin{equation}\label{eq:dim_cond1}
	n\! \left( \frac{2d_{\max}}{q}\! + 1 \right)\! >\! ||\bV||_2^2, \ \ \frac{n}{\log p} \ge  \beta \!\left(\frac{2d_{\max}}{q}  + 1 \right)\!  \frac{||\bV||_2^2}{t_0^2}, \ \ \frac{q^2}{n} \ge \left( \frac{2d_{\max}}{q}\! +1 \right)^{-3}\! \frac{t_0^4}{||\bV||_2^2}.
\end{equation}
Similarly for the condition (b), we need
\begin{equation}\label{eq:dim_cond2}
	n \left(\frac{2d_{\max}}{q}\! + 1 \right)\! > \!||\bV||_2^2, \ \ \frac{nq}{(\log p)^{3/2}} \ge \beta^{3/2} \frac{||\bV||_2^2}{t_0}, \ \ 
	\frac{n}{q^2} \ge  \left( \frac{2d_{\max}}{q}\! + 1 \right)^3\! \frac{||\bV||_2^2}{t_0^4}.
\end{equation}
In summary, for given $\beta >0$, if either \eqref{eq:dim_cond1} or \eqref{eq:dim_cond2} holds, then we have
$$
{\rm P} \left(
\Vert \hat{\bGamma}_{SS} - {\bGamma}_{SS} \Vert _2 
\geq \frac{\lambda_{\min}^Uv_{\text{avg}}}{2} \right) \le \frac{3d_{\max}}{p^\beta}.
$$

\subsection{Proof of (ii) of Theorem \ref{mainthm}}
One can easily check that it suffices to show 
$\Vert \hat{\theta}_S-\theta_S^* \Vert _{\infty} \leq 3\lambda \sqrt{d_{\max}}/(\lambda_{\min}^Uv_{\text{avg}})$ 
holds with high probability.
Indeed, we observe
\begin{align}
	\Vert \hat{\theta}_S-\theta_S^* \Vert _{\infty} \leq&\  \left\lVert \bigl(\frac{\bbX_S^\top \bbX_S}{nq}\bigr)^{-1}\bbX_S^\top \frac{\epsilon}{nq} \right\rVert _{\infty}+ \lambda \left\lVert \bigl(\frac{\bbX_S^\top\bbX}{nq}\bigr)^{-1} \right\rVert _{\infty} \nonumber\\
	=&\ \Big\Vert \underbrace{\hat{\bGamma}_{SS}^{-1}\bbX_S^\top \frac{\epsilon}{nq}}_{=:L_S} \Big\Vert _{\infty} + \lambda \Vert \hat{\bGamma}_{SS}^{-1} \Vert _{\infty}. \nonumber
\end{align}
Moreover, under the event $E_{1a} = \{ \Vert \hat{\bGamma}_{SS}-{\bGamma}_{SS} \Vert _2 \leq \lambda_{\min}^Uv_{\text{avg}}/2\}$,  we have
\begin{align}
	\Vert \hat{\bGamma}_{SS}^{-1} \Vert _{\infty} \leq&\ \Vert \bGamma_{SS}^{-1} \Vert _{\infty} + \Vert \hat{\bGamma}_{SS}^{-1}-\bGamma_{SS}^{-1} \Vert _{\infty} \nonumber \\
	\leq&\ \frac{\sqrt{d_{\max}}}{\lambda_{\min}^Uv_{\text{avg}}} + \sqrt{d_{\max}} \Vert \hat{\bGamma}_{SS}^{-1}-\bGamma_{SS}^{-1} \Vert _2 \label{eqn:infnorm2}\\
	\leq&\ \frac{\sqrt{d_{\max}}}{\lambda_{\min}^Uv_{\text{avg}}} +\sqrt{d_{\max}}\cdot\frac{ \Vert \hat{\bGamma}_{SS}-{\bGamma}_{SS} \Vert _2 /(\lambda_{\min}^Uv_{\text{avg}})^2}{1-\Vert \hat{\bGamma}_{SS}-{\bGamma}_{SS} \Vert _2 /\lambda_{\min}^Uv_{\text{avg}}} \label{eqn:infnorm3}  \\
	\leq&\ \frac{2\sqrt{d_{\max}}}{\lambda_{\min}^Uv_{\text{avg}}}. \nonumber
\end{align}
For \eqref{eqn:infnorm2}, we use the fact 
$$
\Vert \bGamma_{SS}^{-1} \Vert _{\infty} \leq \sqrt{d_{\max}} \Vert \bGamma_{SS}^{-1} \Vert _2 \leq \sqrt{d_{\max}}/(\lambda_{\min}^Uv_{\text{avg}})
$$ 
which is verified by \eqref{b_bound}, and \eqref{eqn:infnorm3} follows from Lemma \ref{lem:invcov_dev}. 

Next, we focus on the term $L_S = \hat{\bGamma}_{SS}^{-1}\bbX_S^\top \epsilon/(nq)$. Since $\hat{\bGamma}_{SS}^{-1}\bbX_S^\top \indep \epsilon$, $L_S | \bbX_S$ follows the centered normal distribution. Let $\bA= \bbX_S \hat{\bGamma}_{SS}^{-1}/nq$, and denote its $j$-th column by $a_j$. Then, the maximum conditional variances of  the components of $L_S | \bbX_S$ is bounded as
$$
\max_{j \in S} \text{Var}((L_S)_j| \bbX_S) = \max_{j \in S}a_j^\top \text{Var}(\epsilon) a_j \le  \Vert \bV \Vert _2 \max_{j} a_j^\top a_j \le  \Vert \bV \Vert _2  \Vert  \diag(\bA^\top \bA)  \Vert 
$$
since $\text{Var}(\epsilon) = \text{diag}_n(\bV) \preceq  \Vert \bV \Vert _2 \text{I}_{nq}$.
In the last inequality, any matrix norm can be used, such as $ \Vert \cdot \Vert _{\max}$, $ \Vert \cdot \Vert _{2}$, and $ \Vert \cdot \Vert _{\infty}$. Finally, as $\bA^\top \bA = \widehat{\bGamma}_{SS}^{-1} / nq$, we can conclude under $E_{1a}$ that
\begin{equation}
	\max_{j \in S} \text{Var}((L_S)_j| \bbX_S) \le \dfrac{ \Vert \bV \Vert _2}{nq}  \Vert  \diag(\widehat{\bGamma}_{SS}^{-1})  \Vert_{\infty}  \leq \frac{ \Vert \bV \Vert _2}{nq} \Vert \widehat{\bGamma}_{SS}^{-1}  \Vert _{\infty} \leq \frac{2 \sqrt{d_{\max}} \Vert \bV \Vert _2}{nq\cdot \lambda_{\min}^Uv_{\text{avg}}} \nonumber
\end{equation}
where the last inequality uses $\Vert \hat\bGamma_{SS}^{-1} \Vert _{\infty} \leq \sqrt{d_{\max}} \Vert \hat\bGamma_{SS}^{-1} \Vert _2 \leq 2\sqrt{d_{\max}}/ (\lambda_{\min}^Uv_{\text{avg}})$. Applying the Gaussian tail bound,we get
\begin{align}\label{lambda_tailprob}
	\mathbb{P}\left( \Vert L_S \Vert _{\infty} \geq \frac{\sqrt{d_{\max}}}{\lambda_{\min}^Uv_{\text{avg}}} \lambda \right) \leq &\ 2|S| \exp \left(-nq\frac{\sqrt{d_{\max}} \lambda^2}{4 \lambda_{\min}^Uv_{\text{avg}} \Vert \bV \Vert _2} \right).
\end{align}
Hence, this probability is at most $2d_{\max} / p^\beta$ if $\lambda$ satisfies
\begin{align}\label{lambdacond2}
    \lambda \geq 2\sqrt{\frac{\log p}{nq} \cdot\beta \frac{\lambda_{\min}^Uv_{\text{avg}} \Vert \bV \Vert_2}{\sqrt{d_{\max}}} }.
\end{align}
In conclusion,under conditions \eqref{eq:dim_cond1} or \eqref{eq:dim_cond2} and \eqref{lambdacond2}, we obtain
\begin{align}\nonumber
    \mathbb{P}\left(\Vert \hat{\theta}_S-\theta_S^* \Vert _{\infty} \leq \frac{3\sqrt{d_{\max}}}{\lambda_{\min}^Uv_{\text{avg}}} \lambda \right) \leq \frac{2d_{\max}}{p^{\beta}}+\mathbb{P}(E_{1a}^c) \leq \frac{5d_{\max}}{p^\beta}
\end{align}
and the proof is completed by applying a union bound over the $p$ rows.

\end{proof}

\section{Supporting lemmas}

\subsection{Lemma \ref{lem:Gamma_SS_dev}}
\begin{lemma}\label{lem:Gamma_SS_dev}
Let $\bX^{(i)}$, $i=1,\dots,n$, be i.i.d. samples from the matrix normal distribution $\mathcal{MN}_{p,q}(\mathbf{0},\bU,\bV)$, and fix any index set $S \subseteq [p]$ with $|S| \leq m$ for some constant $m$. Let $\bGamma_{SS} = \text{tr}(\bV)/q \cdot \bU_{SS}$, $\hat{\bGamma}_{SS} = \sum_{i=1}^{n} \bX_S^{(i)} \bX_S^{(i) \top} /nq$.
Under Assumptions \ref{assum:sparse} and \ref{assum:eigen}, there exists a numerical constant $C>0$ such that for any $t>0$, 
$$
{\rm P} \left[
\Vert \hat{\bGamma}_{SS} - {\bGamma}_{SS} \Vert _2 
\geq C \cdot  \Vert \bU \Vert _2 \cdot t\right]
\le 
3m \exp \bigg[
- \min \Big\{
(t / a_1)^2, t / a_2, (t / a_3)^{2/3}
\Big\}
\bigg],
$$
where
\begin{align*}
	a_1 & = \sqrt{(2 m / q + 1) \Vert \bV \Vert^2_2 / n},\\
	a_2 & = \Vert \bV \Vert_2^2 / (n\sqrt{q}),\\
	a_3 & = \Vert \bV \Vert_2^2 / (nq).	
\end{align*}
\end{lemma}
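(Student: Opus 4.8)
The plan is to exploit the matrix-normal structure to decorrelate the rows and then reduce the spectral-norm deviation to a one-dimensional Gaussian quadratic form, controlled by the Hanson--Wright inequality via an $\varepsilon$-net over the sphere. Since $\bX^{(i)}\sim\mathcal{MN}_{p,q}(\bzero,\bU,\bV)$, the sub-matrix $\bX^{(i)}_S$ is $\mathcal{MN}_{|S|,q}(\bzero,\bU_{SS},\bV)$, so we may write $\bX^{(i)}_S \stackrel{d}{=} \bU_{SS}^{1/2}\bZ^{(i)}\bV^{1/2}$ with $\bZ^{(i)}\in\mathbb{R}^{|S|\times q}$ having i.i.d.\ $\mathcal{N}(0,1)$ entries. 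Stacking the $\bZ^{(i)}$ columnwise into $\bZ=[\bZ^{(1)},\dots,\bZ^{(n)}]\in\mathbb{R}^{|S|\times nq}$,
$$
\hat{\bGamma}_{SS}-\bGamma_{SS} = \bU_{SS}^{1/2}\,\bD\,\bU_{SS}^{1/2},\qquad
\bD:=\frac{1}{nq}\Bigl(\bZ\,\diag_n(\bV)\,\bZ^\top-\text{tr}\bigl(\diag_n(\bV)\bigr)\,I_{|S|}\Bigr),
$$
so that $\|\hat{\bGamma}_{SS}-\bGamma_{SS}\|_2\le\lambda_{\max}(\bU_{SS})\,\|\bD\|_2\le\|\bU\|_2\,\|\bD\|_2$, which explains the prefactor $C\|\bU\|_2$ and reduces everything to a tail bound for $\|\bD\|_2$, where $\mathbb{E}\bD=\bzero$. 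This one step simultaneously disposes of the dependence among the rows of $\bX^{(i)}$ (absorbed into $\|\bU\|_2$) and isolates the dependence among the columns, which survives only as the block matrix $\diag_n(\bV)$.

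Next I would discretize. Because $\bD$ is symmetric, $\|\bD\|_2\le 2\max_{u\in\mathcal{M}}|u^\top\bD u|$ for a $1/4$-net $\mathcal{M}$ of the unit sphere in $\mathbb{R}^{|S|}$, with $|\mathcal{M}|\le 9^{|S|}\le 9^{m}$. For a fixed unit vector $u$, the vector $w:=\bZ^\top u$ is $\mathcal{N}(\bzero,I_{nq})$, and $u^\top\bD u=(nq)^{-1}\bigl(w^\top\diag_n(\bV)\,w-\text{tr}(\diag_n(\bV))\bigr)$. Applying the Hanson--Wright inequality to this centered Gaussian quadratic form, together with $\|\diag_n(\bV)\|_2=\|\bV\|_2$ and $\|\diag_n(\bV)\|_F^2=n\|\bV\|_F^2\le nq\|\bV\|_2^2$, yields the two-term Bernstein tail $\mathbb{P}(|u^\top\bD u|\ge t)\le 2\exp\!\bigl(-c\min\{nq^2t^2/\|\bV\|_F^2,\ nqt/\|\bV\|_2\}\bigr)$; a union bound over $\mathcal{M}$ then controls $\|\bD\|_2$.

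The final step is to convert $9^{m}\exp(-c\min\{\cdots\})$ into the stated form. The net cardinality contributes an $e^{\Theta(m)}$ factor that must be absorbed into the exponent rather than left in the prefactor; absorbing it in the sub-Gaussian branch (and using $\|\bV\|_F^2\le q\|\bV\|_2^2$) turns the variance proxy into $a_1^2=(2m/q+1)\|\bV\|_2^2/n$ — the ``$+\,q$'' from the Frobenius bound, the ``$2m$'' from $\log|\mathcal{M}|$ — while carrying the analogous bookkeeping through the sub-exponential branch of Hanson--Wright produces the remaining regimes $t/a_2$ and $(t/a_3)^{2/3}$; collecting the three ranges of $t$ gives the claimed inequality with only the polynomial prefactor $3m$. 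I expect this last regime-splitting to be the only delicate part: one must verify that the net cardinality is genuinely swallowed (keeping the prefactor $O(m)$ rather than $e^{O(m)}$) and that the crossovers between the three branches line up exactly with $a_1,a_2,a_3$; by contrast the decorrelation and the Hanson--Wright estimate are routine. Assumptions \ref{assum:sparse} and \ref{assum:eigen} enter only to keep $|S|\le m$ a fixed constant and to bound $\|\bV\|_2$ (hence $\|\bU\|_2$) away from $0$ and $\infty$.
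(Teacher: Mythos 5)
Your decorrelation step ($\bX_S^{(i)}=\bU_{SS}^{1/2}\bZ^{(i)}\bV^{1/2}$, pulling out $\|\bU\|_2$ and reducing to the centered quadratic form $\bD=\frac{1}{nq}(\bZ\,\diag_n(\bV)\,\bZ^\top-n\,\mathrm{tr}(\bV)\,\mathrm{I})$) is exactly how the paper begins, and your Hanson--Wright computation for a fixed direction ($\|\diag_n(\bV)\|_2=\|\bV\|_2$, $\|\diag_n(\bV)\|_F^2=n\|\bV\|_F^2$) is correct. But from there the two arguments diverge, and the step you yourself flag as ``delicate'' is a genuine gap, not bookkeeping. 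The $1/4$-net union bound costs a factor $9^{|S|}\le 9^m$, and your plan to swallow it into the exponent only works in the sub-Gaussian branch, where the target exponent $(t/a_1)^2=nqt^2/\big((2m+q)\|\bV\|_2^2\big)$ has been deflated by the factor $2m+q$, leaving slack of order $m\,(t/a_1)^2$ which indeed dominates $m\log 9$ whenever the bound is non-vacuous. No such slack exists in the other branches: $a_2=\|\bV\|_2^2/(n\sqrt q)$ and $a_3=\|\bV\|_2^2/(nq)$ carry no $m$-dependence, so when the sub-exponential branch of Hanson--Wright is the binding one you would need $c\,nqt/\|\bV\|_2-m\log 9\ \ge\ \min\{t/a_2,(t/a_3)^{2/3}\}-\log(3m)$, and in the window where the target exponent is only of order $\log(3m)$ (where the claimed bound is non-trivial but barely so) the left side can fall short by $\Theta(m)$. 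The only escapes are to keep an $e^{O(m)}$ prefactor, or to let $C$ grow with $m$ and $1/\lambda_{\min}^V$ --- either of which proves a weaker statement than the lemma (whose prefactor $3m$ with a numerical $C$ is exactly what is used, as $3d_{\max}$, in Fact~\ref{fact:minimumeigenvalue} and Theorem~\ref{mainthm}). Also note that your route produces only two regimes; the third branch $(t/a_3)^{2/3}$ is not something Hanson--Wright ``produces'' --- it is an artifact of the paper's construction.

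The paper avoids the exponential net cost by a different mechanism: it truncates on the event $\cap_i\{\|\bZ_S^{(i)}\|_2^2\le K_u\}$ with $K_u\asymp\sqrt q+\sqrt u$, applies the matrix Bernstein inequality of Tropp to the truncated summands (whose dimensional prefactor is linear, $2|S|\le 2m$, rather than $9^m$), computes the matrix variance statistic by a fourth-moment calculation --- this is where $(m+1)\|\bV\|_F^2+\mathrm{tr}(\bV)^2$, hence the $2m/q$ in $a_1$, actually comes from, not from absorbing a net --- and controls the truncation bias by Cauchy--Schwarz plus a Gaussian operator-norm tail bound; the $u^{3/2}$ growth of the Bernstein range term $\|\bV\|_2^2K_u\,u/(nq)$ is what creates the $(t/a_3)^{2/3}$ branch. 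If you want to salvage your approach, either adopt a matrix-concentration tool in place of the net (which essentially reproduces the paper's proof), or state and prove the weaker lemma with prefactor $c\,9^m$ and track that constant through Fact~\ref{fact:minimumeigenvalue}; as written, your sketch does not establish the lemma in the form claimed.
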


\begin{proof}

Since $\bX_S^{(i)} \sim \mathcal{MN}_{s, q}(\bzero,  \bU_{SS}, \bV)$, write $\bX_S^{(i)}=\bU_{SS}^{1/2}\bZ_{S}^{(i)}\bV^{1/2}$ where each $\bZ_{S}^{(i)}$ has i.i.d standard normal entries. Then
$$
\hat{\bGamma}_{SS}
=\frac{1}{nq}\sum_{i=1}^n \bX_{S}^{(i)} \bigl( \bX_{S}^{(i)} \bigr)^\top 
=\frac{1}{nq}\bU_{SS}^{\frac{1}{2}}\sum_{i=1}^n \bZ_{S}^{(i)} \bV \big( \bZ_{S}^{(i)} \big)^\top \bU_{SS}^{\frac{1}{2}},
$$
thus
\begin{align*}
     \left\Vert \hat{\bGamma}_{SS} - {\bGamma}_{SS} \right\Vert _2 =&  \left\Vert \frac{1}{q} \bU_{SS}^{\frac{1}{2}} \left( \frac{1}{n} \sum_{i=1}^n \bZ_{S}^{(i)} \bV (\bZ_{S}^{(i)})^\top - \text{tr}(\bV)\text{I}_s \right) \bU_{SS}^{\frac{1}{2}} \right\Vert _2\\
    \leq&\ \frac{1}{q} \left\Vert \bU_{SS} \right\Vert _2 \cdot  \left\Vert \frac{1}{n} \sum_{i=1}^n \bZ_{S}^{(i)} \bV (\bZ_{S}^{(i)})^\top - \text{tr}(\bV)\text{I}_s \right\Vert _2\\
    \leq&\ \frac{1}{q} \left\Vert \bU \right\Vert _2 \cdot  \left\Vert \frac{1}{n} \sum_{i=1}^n \bZ_{S}^{(i)} \bV (\bZ_{S}^{(i)})^\top - \text{tr}(\bV)\text{I}_s \right\Vert _2.
\end{align*}

We apply the Bernstein inequality for bounded random matrices (Theorem 6.1.1 of \citet{tropp2015introduction}) to control the rate of $ \Vert \frac{1}{n} \sum_{i=1}^n \bZ_{S}^{(i)} \bV (\bZ_{S}^{(i)})^\top - \text{tr}(\bV)\text{I}_s \Vert _2$.
To this end, define the event $A = \cap_{i=1}^n A_i \equiv \cap_{i=1}^n \{ \Vert \bZ_S^{(i)} \Vert _2^2 \le K \}$ for some constant $K>0$ to be specified later. On the event $A$, it holds
\begin{align*}
\frac{1}{n} \sum_{i=1}^n \big\{\bZ^{(i)}_{S} \bV (\bZ^{(i)}_{S})^\top - \mathbb{E}\bZ^{(i)}_{S} \bV (\bZ^{(i)}_{S})^\top \big\}
=&\ \frac{1}{n}\sum_{i=1}^n \big\{\bZ^{(i)}_{S} \bV (\bZ^{(i)}_{S})^\top \text{I}_{A_i} - \mathbb{E}\bZ^{(i)}_{S} \bV (\bZ^{(i)}_{S})^\top \text{I}_{A_i} \big\} \\
-&\ \mathbb{E}\bZ^{(1)}_{S} \bV (\bZ^{(1)}_{S})^\top \text{I}_{A_1^c}, 
\end{align*}
which implies
\begin{align*}
    \bigg\|\frac{1}{n}\sum_{i=1}^n \bigl\{\bZ^{(i)}_{S} \bV (\bZ^{(i)}_{S})^\top - \mathbb{E}\bZ^{(i)}_{S} \bV (\bZ^{(i)}_{S})^\top \bigr\}\bigg\|_2
    &\le 
    \underbrace{\frac{1}{n}\bigg\|\sum_{i=1}^n \bigl\{\bZ^{(i)}_{S} \bV (\bZ^{(i)}_{S})^\top \text{I}_{A_i}
       - \mathbb{E}\bZ^{(i)}_{S} \bV (\bZ^{(i)}_{S})^\top \text{I}_{A_i}\bigr\}\bigg\|_2}_{=:B_1}\\
    &\quad+
    \underbrace{\bigg\|\mathbb{E}\bZ^{(1)}_{S} \bV (\bZ^{(1)}_{S})^\top \text{I}_{A_1^c}\bigg\|_2}_{=:B_2}\,.
\end{align*}

\subsubsection*{(1) Concentration of $B_1$}
We apply the matrix Bernstein inequality to $\bS_i \equiv \bZ^{(i)}_{S} \bV (\bZ^{(i)}_{S})^\top \text{I}_{A_i} - \mathbb{E}\bZ^{(i)}_{S} \bV (\bZ^{(i)}_{S})^\top \text{I}_{A_i}$.

\subsubsection*{Boundedness of $\bS_i$}
Assume the event $A$ holds. Then
\begin{align*}
    \Vert  \bS_i  \Vert _2 & \le  \Vert  \bZ^{(i)}_{S} \bV      (\bZ^{(i)}_{S})^\top  \Vert _2 +  \Vert  \mathbb{E}\bZ^{(i)}_{S} \bV (\bZ^{(i)}_{S})^\top \text{I}_{A_i}  \Vert _2 \\
    & \le  \Vert  \bZ^{(i)}_{S} \bV (\bZ^{(i)}_{S})^\top  \Vert _2 + \mathbb{E}  \Vert  \bZ^{(i)}_{S} \bV (\bZ^{(i)}_{S})^\top \text{I}_{A_i}  \Vert _2
    \le 2  \Vert \bV \Vert _2 K,   
\end{align*}
where the last inequality holds since
$\Vert  \bZ^{(i)}_{S} \bV (\bZ^{(i)}_{S})^\top  \Vert _2 \le  \Vert  \bZ^{(i)}_{S} \Vert _2^2  \Vert  \bV  \Vert _2 \le  \Vert  \bV  \Vert _2 K.$

\subsubsection*{Calculation of $ \Vert \mathbb{E} \bS_1 \bS_1^\top \Vert _2$}
Observe that
\begin{align*}
    \mathbb{E} \bS_1 \bS_1^\top & =  
    \mathbb{E} \bZ^{(1)}_{S} \bV (\bZ^{(1)}_{S})^\top \bZ^{(1)}_{S} \bV (\bZ^{(1)}_{S})^\top  \text{I}_{A_1} - \mathbb{E}\bZ^{(i)}_{S} \bV (\bZ^{(1)}_{S})^\top \text{I}_{A_1}\mathbb{E}\bZ^{(1)}_{S} \bV (\bZ^{(1)}_{S})^\top \text{I}_{A_1} \\
    & \preceq \mathbb{E} \bZ^{(1)}_{S} \bV (\bZ^{(1)}_{S})^\top \bZ^{(1)}_{S} \bV (\bZ^{(1)}_{S})^\top.
\end{align*}
Thus, we have
$ \Vert \mathbb{E} \bS_1 \bS_1^\top \Vert _2 \le  \Vert \mathbb{E} \bZ^{(1)}_{S} \bV (\bZ^{(1)}_{S})^\top \bZ^{(1)}_{S} \bV (\bZ^{(1)}_{S})^\top \Vert _2$.
Writing the $k$-th row of $\bZ^{(1)}$ as $Z^{(1)}_{k}$, the $(k,\ell)$-th entry of the above expectation is
$$
\mathbb{E} (Z^{(1)}_k)^\top \bV \sum_{i \in S} Z^{(1)}_i (Z^{(1)}_i)^\top \bV Z^{(1)}_\ell = \sum_{i \in S} \sum_{j_1, j_2, j_3, j_4} \bV_{j_1 j_2} \bV_{j_3 j_4} \mathbb{E} Z^{(1)}_{kj_1} Z^{(1)}_{ij_2} Z^{(1)}_{\ell j_3} Z^{(1)}_{ij_4}.
$$

When $k=\ell$, the only nonzero fourth moments in right-hand side arise from appropriate pairings of the $Z^{(1)}_{ab}$ terms.
\begin{align*}
   \mathbb{E} Z^{(1)}_{kj_1} Z^{(1)}_{ij_2} Z^{(1)}_{k j_3} Z^{(1)}_{ij_4} =
   \begin{cases}
       3 & i=k, j_1=j_2=j_3=j_4 \\
       1 & i=k, j_1=j_2 \neq j_3=j_4\\
       1 & i=k, j_1=j_3 \neq j_2=j_4\\
       1 & i=k, j_1=j_4 \neq j_2=j_3\\
       1 & i \neq k, j_1=j_3, j_2=j_4\\
       0 & otherwise.
   \end{cases}
\end{align*}
Hence,
$$
\mathbb{E} (Z^{(1)}_k)^\top \bV \Bigl(\sum_{i \in S} Z^{(1)}_i (Z^{(1)}_i)^\top \Bigr)\bV Z^{(1)}_k
= (|S|+1) \sum_{j_1, j_2} V_{j_1 j_2}^2 + (\sum_{j} V_{jj})^2 \leq(m+1)  \Vert \bV \Vert _F^2 + \text{tr}(\bV)^2.
$$
since $|S| \leq m$.

When $k \neq \ell$, all fourth moments $\mathbb{E} Z^{(1)}_{kj_1} Z^{(1)}_{ij_2} Z^{(1)}_{\ell j_3} Z^{(1)}_{ij_4}=0$. Therefore, $\Vert \mathbb{E} \bZ^{(1)}_{S} \bV (\bZ^{(1)}_{S})^\top \bZ^{(1)}_{S} \bV (\bZ^{(1)}_{S})^\top \Vert _2$ equals to the maximum diagonal entry, which is bounded above by $(m+1) \Vert \bV \Vert _F^2 + \text{tr}(\bV)^2$.

\subsubsection*{Bernstein inequality of $\bS_i$}
Assume that event $A$ holds. Then, for any $\delta>0$,
\begin{align*}
\mathbb{P}\left(	\bigg|\bigg|\sum_{i=1}^n \big\{\bZ^{(i)}_{S} \bV (\bZ^{(i)}_{S})^\top \text{I}_{A_i} - \mathbb{E}\bZ^{(i)}_{S} \bV (\bZ^{(i)}_{S})^\top \text{I}_{A_i} \big\}\bigg|\bigg|_2 \ge n \delta  \right) \qquad \qquad \qquad \qquad \qquad \qquad \\
\qquad\qquad\qquad\qquad\qquad \le 2m \exp\left\{
-\dfrac{1}{4} \min\bigg\{
\dfrac{n\delta^2}{(m+1)  \Vert \bV \Vert _F^2 + \text{tr}(\bV)^2},\dfrac{3n\delta}{2 \Vert \bV \Vert _2^2 K}
\bigg\} 
\right\}.
\end{align*}
Equivalently, for any $u>0$,
\begin{align*}
  \frac{1}{n}\bigg|\bigg|\sum_{i=1}^n \big\{\bZ^{(i)}_{S} \bV (\bZ^{(i)}_{S})^\top \text{I}_{A_i} - \mathbb{E}\bZ^{(i)}_{S} \bV (\bZ^{(i)}_{S})^\top \text{I}_{A_i} \big\}\bigg|\bigg|_2  \qquad \qquad \qquad \qquad \qquad \qquad \\
  \qquad\qquad\qquad\qquad\qquad \ge 4 \max\bigg\{\sqrt{\dfrac{\Big((m+1)  \Vert \bV \Vert _F^2 + \text{tr}(\bV)^2 \Big)u}{n}},\dfrac{2 \Vert \bV \Vert _2^2 K u}{3n}
\bigg\} 
\end{align*}
holds with probability at most $2m \exp(-u)$.

\subsubsection*{(2) Upper bound for $B_2$}
First, observe that
\begin{align*}
\bigg|\bigg|\mathbb{E}\bZ^{(1)}_{S} \bV (\bZ^{(1)}_{S})^\top \text{I}_{A_1^c}\bigg|\bigg|_2 & = 
\max\limits_{x:  \Vert x \Vert _2=1} \mathbb{E} (x^\top \bZ^{(1)}_{S} \bV (\bZ^{(1)}_{S})^\top x ) \cdot \text{I}_{A_1^c}\\
& \le 
\max\limits_{x:  \Vert x \Vert _2=1} \sqrt{\mathbb{E} (x^\top \bZ^{(1)}_{S} \bV (\bZ^{(1)}_{S})^\top x )^2 \cdot \mathbb{E} \text{I}_{A_1^c}}
\end{align*}
holds by the Cauchy-Schwartz inequality.

\subsubsection*{Calculation of $\mathbb{E} (x^\top \bZ^{(1)}_{S} \bV (\bZ^{(1)}_{S})^\top x )^2$}
For any unit vector $x \in \mathbb{R}^s$, note that
\begin{align*}
\mathbb{E} (x^\top \bZ^{(1)}_{S} \bV (\bZ^{(1)}_{S})^\top x )^2 & = \ \mathbb{E}\Bigl(
\sum\limits_{i_1, i_2 \in S}
\sum\limits_{j_1, j_2 \in [q]} x_{i_1} Z_{i_1 j_1}^{(1)} V_{j_1 j_2}Z_{i_2 j_2}^{(1)} x_{i_2}
\Bigr)^2
\\
& =
\sum_{\substack{i_1,i_2,i_3,i_4\in S\\j_1,\dots,j_4\in[q]}} x_{i_1} x_{i_2} x_{i_3} x_{i_4} V_{j_1 j_2} V_{j_3 j_4} \mathbb{E} Z_{i_1 j_1}^{(1)} Z_{i_2 j_2}^{(1)} Z_{i_3 j_3}^{(1)} Z_{i_4 j_4}^{(1)}.    
\end{align*}
The only nonzero fourth moments occur in the following four cases:
\begin{enumerate}
	\item $(i_1, j_1) = (i_2, j_2)$, $(i_3, j_3) = (i_4, j_4)$, and
	$(i_1, j_1) \neq (i_3, j_3)$
	$$
	\mathbb{E} Z_{i_1 j_1}^{(1)} Z_{i_2 j_2}^{(1)} Z_{i_3 j_3}^{(1)} Z_{i_4 j_4}^{(1)} = \mathbb{E}(Z_{i_1 j_1}^{(1)})^2 \mathbb{E}(Z_{i_3 j_3}^{(1)})^2 = 1
	$$
	\item $(i_1, j_1) = (i_3, j_3)$, $(i_2, j_2) = (i_4, j_4)$, and
	$(i_1, j_1) \neq (i_2, j_2)$ 
	$$
	\mathbb{E} Z_{i_1 j_1}^{(1)} Z_{i_2 j_2}^{(1)} Z_{i_3 j_3}^{(1)} Z_{i_4 j_4}^{(1)} = \mathbb{E}(Z_{i_1 j_1}^{(1)})^2 \mathbb{E}(Z_{i_2 j_2}^{(1)})^2 = 1
	$$
	\item $(i_1, j_1) = (i_4, j_4)$, $(i_3, j_3) = (i_2, j_2)$, and
	$(i_1, j_1) \neq (i_3, j_3)$ 
	$$
	\mathbb{E} Z_{i_1 j_1}^{(1)} Z_{i_2 j_2}^{(1)} Z_{i_3 j_3}^{(1)} Z_{i_4 j_4}^{(1)} = \mathbb{E}(Z_{i_1 j_1}^{(1)})^2 \mathbb{E}(Z_{i_3 j_3}^{(1)})^2 = 1
	$$	
	\item $(i_1, j_1) = (i_2, j_2)=(i_3, j_3) = (i_4, j_4)$
	$$
	\mathbb{E} Z_{i_1 j_1}^{(1)} Z_{i_2 j_2}^{(1)} Z_{i_3 j_3}^{(1)} Z_{i_4 j_4}^{(1)} = \mathbb{E}(Z_{i_1 j_1}^{(1)})^4 = 3
	$$
\end{enumerate}
Therefore, we have
\begin{align*}
    \mathbb{E} (x^\top \bZ^{(1)}_{S} \bV (\bZ^{(1)}_{S})^\top x )^2
    =&\ \sum\limits_{i_1, i_2 \in S}
    \sum\limits_{j_1, j_2 \in [q]} x_{i_1}^2 x_{i_2}^2(V_{j_1 j_1} V_{j_2 j_2} + 2V_{j_1 j_2}^2)
    + 3\sum\limits_{i_1 \in S}
    \sum\limits_{j_1 \in [q]} x_{i_1}^4 V_{j_1 j_1}^2\\
    \leq&\ \sum\limits_{j_1, j_2 \in [q]} (V_{j_1 j_1} V_{j_2 j_2} + 2V_{j_1 j_2}^2)
    + 3\sum\limits_{j_1 \in [q]} V_{j_1 j_1}^2
    =\ \text{tr}(\bV)^2 + 2 \Vert \bV \Vert _F^2.
\end{align*}

\subsubsection*{Calculation of $\text{\rm P}(A_1^c)$}

Note that $ \Vert \bZ_S^{(1)} \Vert _2= \Vert (\bZ_S^{(1)})^\top \Vert _2 = \sigma_{\max}((\bZ_S^{(1)})^\top)$. 
By Theorem 6.1 of \citet{wainwright2019high}, 
$$
\mathbb{P}\bigl(
\sigma_{\max}((\bZ_S^{(1)})^\top) \ge \sqrt{q} (1 + \delta) + \sqrt{m} \bigr) 
\le \exp(-q\delta^2 / 2), \quad \forall\,\delta>0.
$$
Here $S$ is treated as fixed, so the $q$–dimension plays the role of sample size.
Setting $\delta = \sqrt{2(u + \log n) /q}$ for $u>0$ gives
$$
\mathbb{P}\bigl(
\sigma_{\max}((\bZ_S^{(1)})^\top) \ge \sqrt{q} + \sqrt{m} + \sqrt{2(u + \log n)}
\bigr) \le \frac{1}{n}\exp(-u), \quad u>0.
$$
By a union bound over $i=1,\dots,n$, it follows that
$$
\mathbb{P}\left(\cup_{i=1}^n \Big\{ \Vert \bZ_S^{(i)} \Vert _2^2 \ge K_u \Big\} \right) \le 
n\, \mathbb{P}\left( \Vert \bZ_S^{(1)} \Vert _2^2 \ge K_u \right) \le \exp(-u),
$$
where $K_u = \sqrt{q} + \sqrt{m} + \sqrt{2(u + \log n)}$.	
Hence, by choosing $u \ge \log n$, we get
$$
\bigg|\bigg|\mathbb{E}\bZ^{(1)}_{S} \bV (\bZ^{(1)}_{S})^\top \text{I}_{A_1^c}\bigg|\bigg|_2
\le
\sqrt{
	\text{tr}(\bV)^2 + 2 \Vert \bV \Vert _F^2}\ e^{-u / 2} \le 
\sqrt{\dfrac{\text{tr}(\bV)^2 + 2 \Vert \bV \Vert _F^2}{n}}.
$$

\subsubsection*{(3) Combined results}
Now, define
$$
t_1 = 4 \max\bigg\{
\sqrt{\dfrac{\Big((m+1)  \Vert \bV \Vert _F^2 + \text{tr}(\bV)^2 \Big)u}{n}},\dfrac{2 \Vert \bV \Vert _2^2 K_u \cdot u}{3n}
\bigg\}, \quad t_2 = \sqrt{\dfrac{\text{tr}(\bV)^2 + 2 \Vert \bV \Vert _F^2}{n}},
$$
where $K_u=\sqrt{q} + \sqrt{m} + \sqrt{2(u+\log n)}$
and set $A \equiv \cap_{i=1}^n \{ \Vert \bZ_S^{(i)} \Vert _2^2 \le K_u\}$. Then
\begin{align*}
&\mathbb{P}\left[
\bigg|\bigg|\frac{1}{n}\sum_{i=1}^n \big\{\bZ^{(i)}_{S} \bV (\bZ^{(i)}_{S})^\top - \mathbb{E}\bZ^{(i)}_{S} \bV (\bZ^{(i)}_{S})^\top \big\}\bigg|\bigg|_2 \ge t_1 + t_2
\right] \\
\le\, &\mathbb{P}\left[
\bigg|\bigg|\frac{1}{n}\sum_{i=1}^n \big\{\bZ^{(i)}_{S} \bV (\bZ^{(i)}_{S})^\top - \mathbb{E}\bZ^{(i)}_{S} \bV (\bZ^{(i)}_{S})^\top \big\}\bigg|\bigg|_2 \ge t_1 + t_2 \bigg| A \right] + \mathbb{P}(A^c)\\
\le\, &\mathbb{P}\left[
\frac{1}{n}\bigg|\bigg|\sum_{i=1}^n \big\{\bZ^{(i)}_{S} \bV (\bZ^{(i)}_{S})^\top \text{I}_{A_i} - \mathbb{E}\bZ^{(i)}_{S} \bV (\bZ^{(i)}_{S})^\top \text{I}_{A_i} \big\}\bigg|\bigg|_2\ge t_1 \bigg| A \right] + \exp(-u)\\
\le\, &3m \exp(-u)    
\end{align*}
Therefore, for $u > 1 \vee \log n$, we have
\begin{equation}
  \begin{aligned} \label{Gamma concentration inequality}
    \Vert \hat{\bGamma}_{SS} - {\bGamma}_{SS} \Vert _2
    \geq&\, C \cdot  \Vert \bU \Vert _2 \cdot \max\bigg\{
\sqrt{\dfrac{\Big((m+1)  \Vert \bV \Vert _F^2 + \text{tr}(\bV)^2\Big)u }{nq^2}},
\dfrac{2 \Vert \bV \Vert _2^2 K_u \cdot u}{3nq} \bigg\} \\
\geq&\, C \cdot  \Vert \bU \Vert _2 \cdot \max\bigg\{
\sqrt{\dfrac{\Big((m+1)  \Vert \bV \Vert _F^2 + \text{tr}(\bV)^2\Big)u }{nq^2}},
\dfrac{4 \Vert \bV \Vert _2^2 (u^{\frac{3}{2}}+u\sqrt{q})}{3nq} \bigg\}
 \end{aligned}  
\end{equation}
with probability at most $3m \exp(-u)$.
Since $K_u \leq 2\sqrt{q}+2\sqrt{u}$, it follows that
$$
\Vert \hat{\bGamma}_{SS} - {\bGamma}_{SS} \Vert _2 
\geq 2 C \cdot  \Vert \bU \Vert _2 \cdot 
\max\big\{
a_1 \sqrt{u},\
a_2 u,\
a_3 u\sqrt{u}
\big\}
$$
holds with the same probability, where 
\begin{align*}
    a_1 = \sqrt{(m+1) \Vert \bV \Vert _F^2 + \text{tr}(\bV)^2}/(\sqrt{n}q),\quad
    a_2 = \Vert \bV \Vert_2^2 / (n\sqrt{q}), \quad
    a_3 = \Vert \bV \Vert_2^2 / (nq).	    
\end{align*}
This leads to 
$$
\mathbb{P} \left[
\Vert \hat{\bGamma}_{SS} - {\bGamma}_{SS} \Vert _2 
\geq 2 C \cdot  \Vert \bU \Vert _2 \cdot t\right]
\le 
3m \exp \bigg[
- \min \Big\{
(t / a_1)^2, t / a_2, (t / a_3)^{2/3}
\Big\}
\bigg].
$$
Finally, using $\|\bV\|_F^2\le q\|\bV\|_2^2$ and $\text{tr}(\bV)/q\le\|\bV\|_2$ shows $a_1 \le \sqrt{(2m/q+1) \,\|\bV\|_2^2/n}$, which completes the proof.
\end{proof}

\subsection{Lemma \ref{lem:invcov_dev}}
\begin{lemma}\label{lem:invcov_dev}
    Let $\bSigma$ and $\hat{\bSigma}$ be nonsingular matrices, and suppose $\|\bSigma^{-1}\| \cdot \|\widehat{\bSigma} - \bSigma\| < 1$. Then for any sub-multiplicative matrix norm,
\begin{equation}\nonumber
    \|\widehat{\bSigma}^{-1} - \bSigma^{-1} \| \le \frac{\|\bSigma^{-1}\|^2 \cdot \|\widehat{\bSigma} - \bSigma\|}{1-\|\bSigma^{-1}\| \cdot \|\widehat{\bSigma} - \bSigma\| }.
\end{equation}
\end{lemma}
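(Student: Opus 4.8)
The plan is to combine the standard resolvent identity for matrix inverses with a short self-bounding argument. First I would record the identity
$$
\widehat{\bSigma}^{-1} - \bSigma^{-1} = \widehat{\bSigma}^{-1}\bigl(\bSigma - \widehat{\bSigma}\bigr)\bSigma^{-1},
$$
which is checked directly by expanding the right-hand side as $\widehat{\bSigma}^{-1}\bSigma\bSigma^{-1} - \widehat{\bSigma}^{-1}\widehat{\bSigma}\bSigma^{-1}$. Applying sub-multiplicativity of the norm then gives $\Vert\widehat{\bSigma}^{-1} - \bSigma^{-1}\Vert \le \Vert\widehat{\bSigma}^{-1}\Vert\,\Vert\widehat{\bSigma}-\bSigma\Vert\,\Vert\bSigma^{-1}\Vert$, so everything reduces to controlling the unknown factor $\Vert\widehat{\bSigma}^{-1}\Vert$.

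To handle that factor I would use the triangle inequality $\Vert\widehat{\bSigma}^{-1}\Vert \le \Vert\bSigma^{-1}\Vert + \Vert\widehat{\bSigma}^{-1}-\bSigma^{-1}\Vert$. Writing $x := \Vert\widehat{\bSigma}^{-1}-\bSigma^{-1}\Vert$ (finite since $\widehat{\bSigma}$ is nonsingular by hypothesis) and $\rho := \Vert\bSigma^{-1}\Vert\,\Vert\widehat{\bSigma}-\bSigma\Vert \in [0,1)$, the previous two displays combine to
$$
x \le \bigl(\Vert\bSigma^{-1}\Vert + x\bigr)\Vert\widehat{\bSigma}-\bSigma\Vert\,\Vert\bSigma^{-1}\Vert = \Vert\bSigma^{-1}\Vert^{2}\,\Vert\widehat{\bSigma}-\bSigma\Vert + \rho\, x .
$$
Since $\rho < 1$, rearranging yields $x(1-\rho) \le \Vert\bSigma^{-1}\Vert^{2}\,\Vert\widehat{\bSigma}-\bSigma\Vert$, which is exactly the asserted bound. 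As an alternative I could instead factor $\widehat{\bSigma} = \bSigma(\bI + \bE)$ with $\bE = \bSigma^{-1}(\widehat{\bSigma}-\bSigma)$, note $\Vert\bE\Vert \le \rho < 1$, expand $\widehat{\bSigma}^{-1} = (\bI+\bE)^{-1}\bSigma^{-1} = \bigl(\sum_{k\ge 0}(-\bE)^{k}\bigr)\bSigma^{-1}$ via the Neumann series, and sum $\sum_{k\ge 1}\Vert\bE\Vert^{k} = \Vert\bE\Vert/(1-\Vert\bE\Vert)$ to get the same estimate.

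There is no real obstacle here: the only point requiring a moment's care is the rearrangement in the self-bounding step, which is legitimate precisely because the hypothesis $\Vert\bSigma^{-1}\Vert\,\Vert\widehat{\bSigma}-\bSigma\Vert < 1$ holds and $x$ is finite (the latter guaranteed by nonsingularity of $\widehat{\bSigma}$). If one prefers to avoid even mentioning finiteness, the Neumann-series route gives the bound unconditionally on $x$ and is arguably cleaner; I would include whichever of the two fits the surrounding exposition best.
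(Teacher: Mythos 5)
Your main argument — the resolvent identity $\widehat{\bSigma}^{-1}-\bSigma^{-1}=\widehat{\bSigma}^{-1}(\bSigma-\widehat{\bSigma})\bSigma^{-1}$, sub-multiplicativity, the triangle-inequality bound on $\Vert\widehat{\bSigma}^{-1}\Vert$, and the self-bounding rearrangement under $\rho<1$ — is exactly the paper's proof (the paper merely writes the identity with the factors in the other order). Both versions are correct, and your explicit remark about finiteness of $\Vert\widehat{\bSigma}^{-1}-\bSigma^{-1}\Vert$ is a harmless extra precaution.
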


\begin{proof}
Observe that
\begin{align*}
	\|\widehat{\bSigma}^{-1} - \bSigma^{-1} \|
	&= \|\bSigma^{-1}(\widehat{\bSigma} - \bSigma)\widehat{\bSigma}^{-1}\| \\
	&\le \|\bSigma^{-1}\| \cdot \|\widehat{\bSigma} - \bSigma\| \cdot \|\widehat{\bSigma}^{-1}\| \\
	&\le \|\bSigma^{-1}\| \cdot \|\widehat{\bSigma} - \bSigma\| \cdot \big(\|\widehat{\bSigma}^{-1} - \bSigma^{-1}\| + \|\bSigma^{-1}\| \big)
\end{align*}
By rearranging and using the assumption $\|\bSigma^{-1}\| \cdot \|\widehat{\bSigma} - \bSigma\| < 1$, we yield the result.
\end{proof}

\section{Additional results for simulation study} \label{appendix:numerical}

\subsection{Results - homongenous variances} \label{appendix:homogeneous}
We first present simulation results under homogeneous variances, which correspond to the setting of the main experiment in Section 4, except that the diagonal entries $\omega_{ii}$ are not scaled by uniform random numbers. All other settings follow analogously to Section 4.

\begin{figure}[htbp]
  \centering
  \includegraphics[width=\textwidth]{./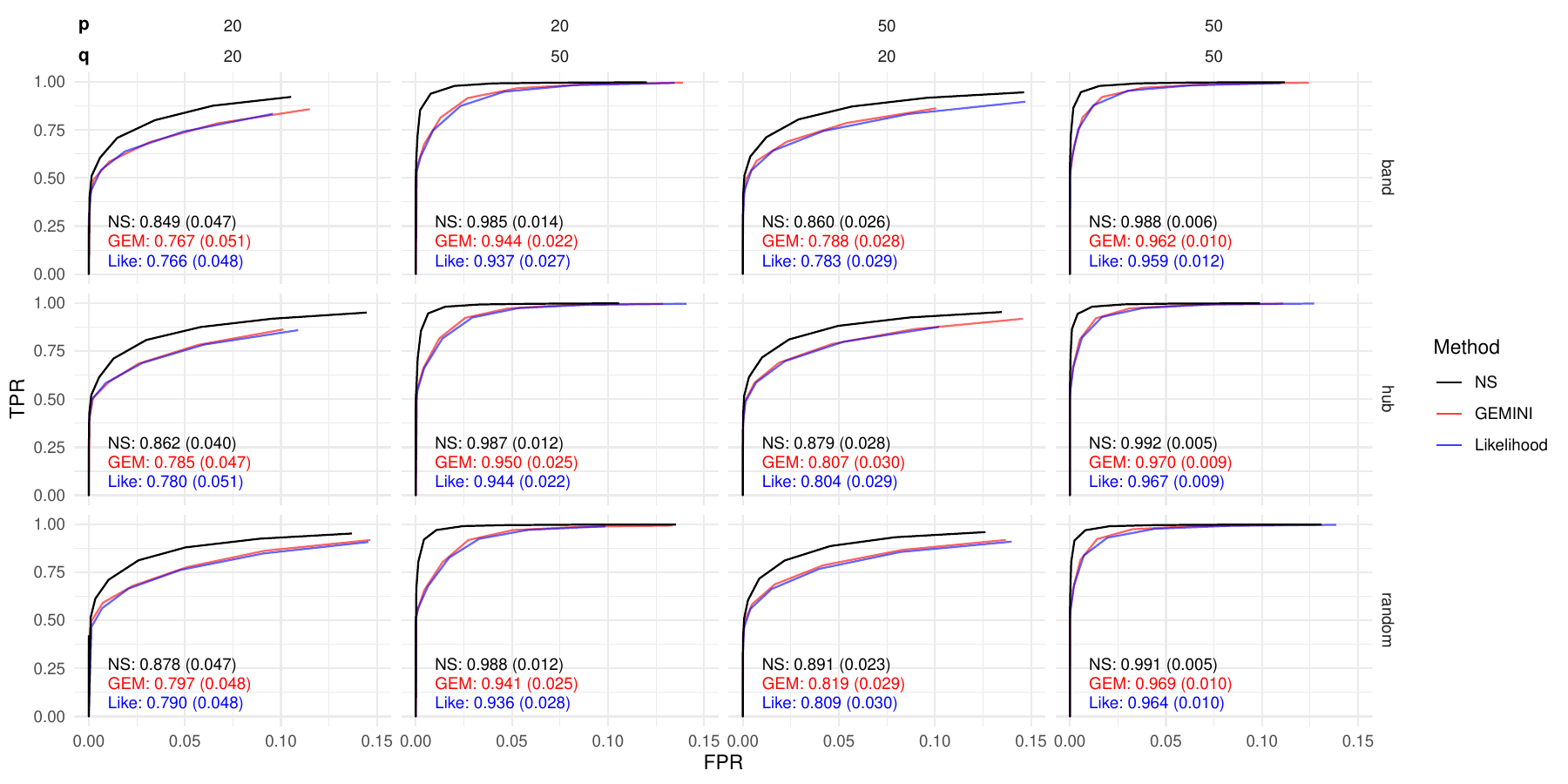}
   \includegraphics[width=\textwidth]{./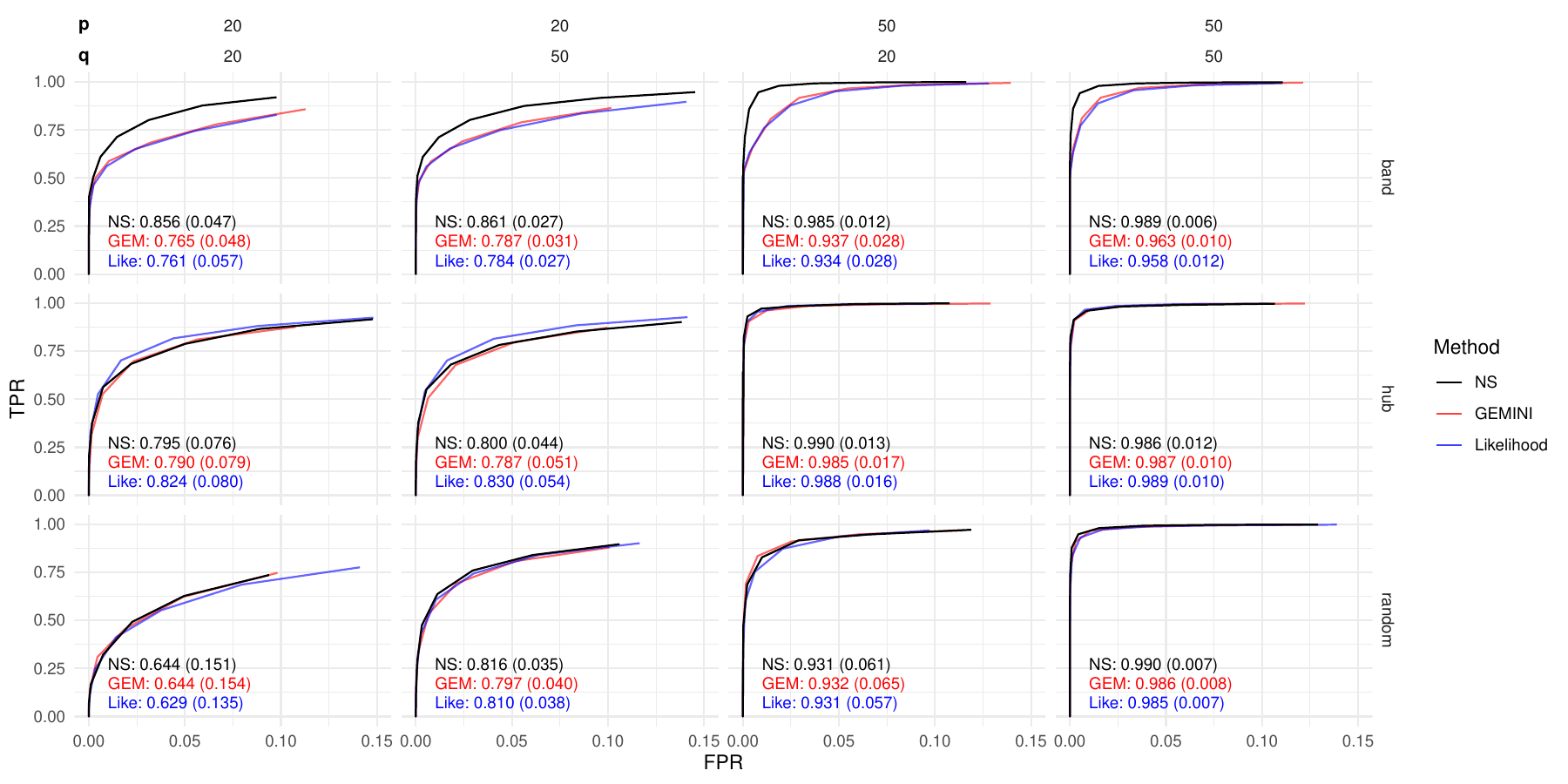}
  \caption{The ROC curves and partial AUC values for row-wise (top) and column-wise (bottom) graph estimation under homogeneous variances, comparing various methods. Horizontal labels denote $(p,q)$ combinations; vertical labels indicate the true graph structure. Text within each panel gives the partial AUC and its standard deviation (in parentheses).}
  \label{fig:sim_homo}
\end{figure}

Figure \ref{fig:sim_homo} demonstrates the estimation performance under homogeneous variance. Similar to the main results in Section 4.2, our method shows a clear advantage when estimating the banded structure, while providing comparable performance to other methods for hub and random structures.

\subsection{Results - sample size} \label{appendix:samplesize}
Next, we investigate the estimation performance as the sample size varies. To this end, we fix $p$ and $q$ at $20$, and compare the results for $n=20$, $50$, and $200$. Other settings, including graph structures and signal strength $\rho$, remain identical to those in the main experiment.

The results are presented in Figure \ref{fig:sim_samplesize}. As the sample size $n$ increases, overall estimation performance improves, and our matrixNS remains the best performing method across all settings. Combined with the findings from the main experiment, these results support that the effective sample size—$nq$ for row-wise estimation and $np$ for column-wise estimation—is a key factor of graph recovery performance.

\begin{figure}[htbp]
  \centering
  \includegraphics[width=\textwidth]{./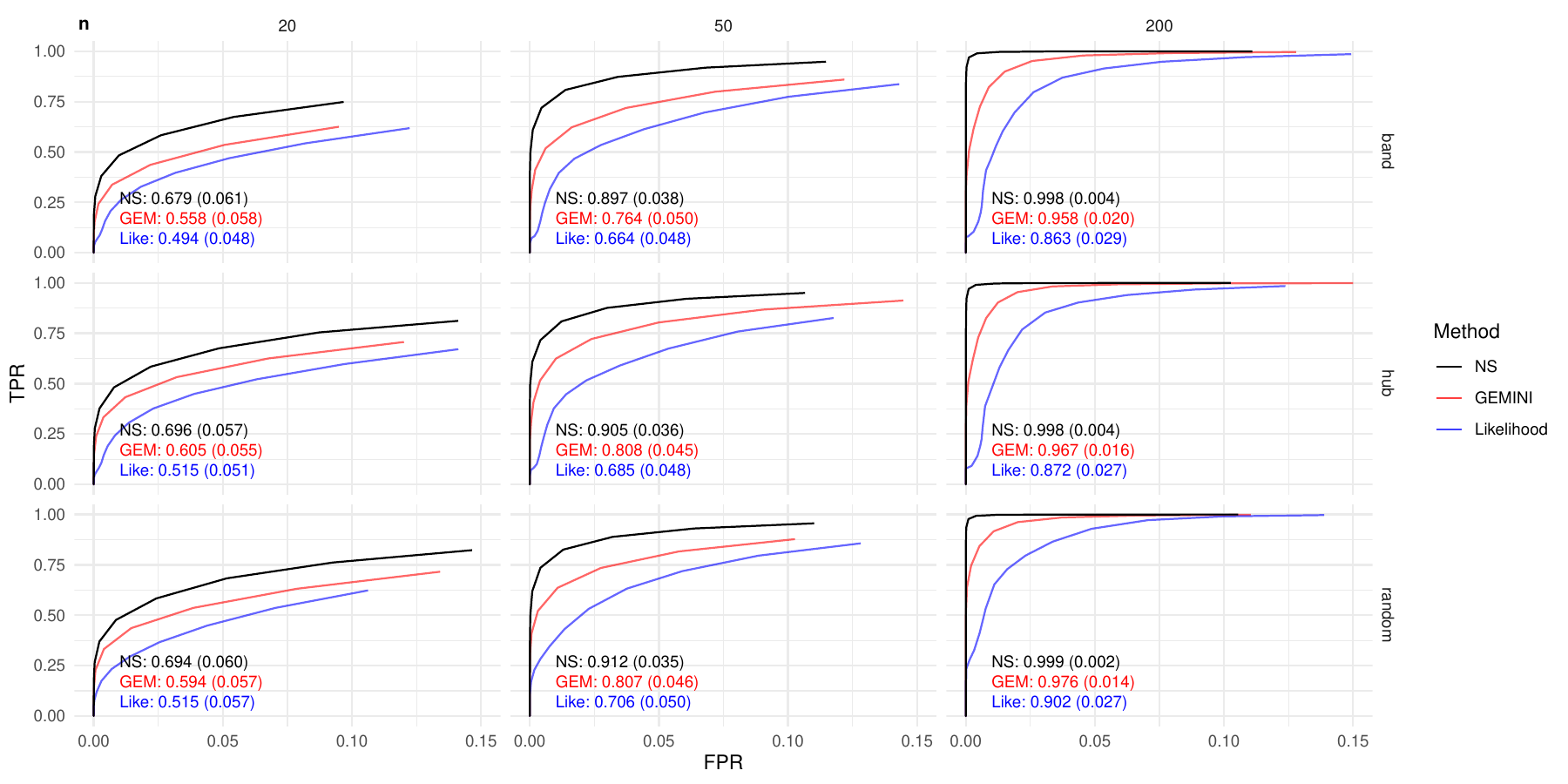}
   \includegraphics[width=\textwidth]{./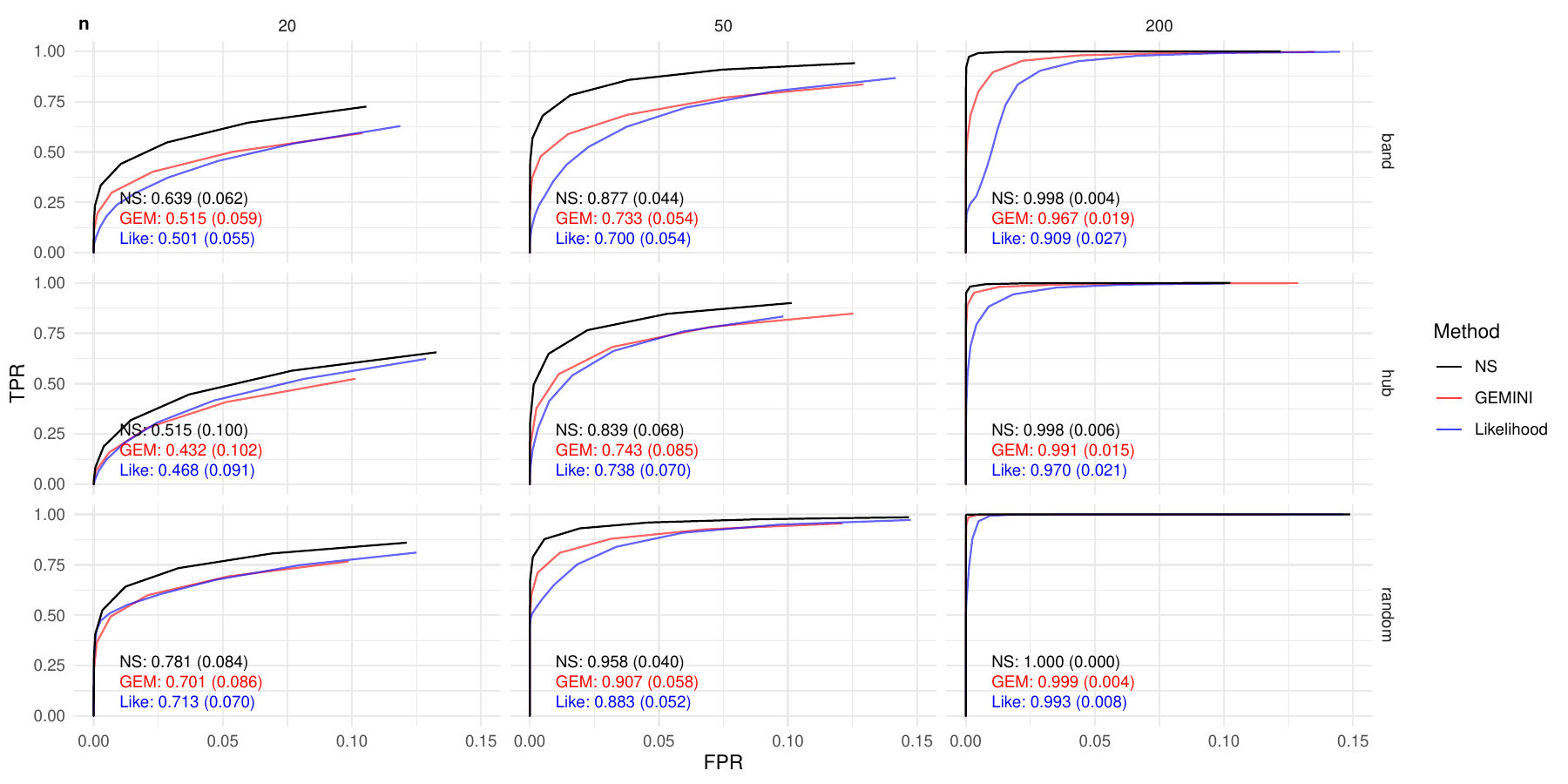}
  \caption{The ROC curves and partial AUC values for row-wise (top) and column-wise (bottom) graph estimation, under various sample size. Data dimension $(p,q)$ are fixed to $(20,20)$.}
  \label{fig:sim_samplesize}
\end{figure}

\subsection{Results - signal strength} \label{appendix:signalstrength}
We then examine the effect of signal strength by adjusting the signal size parameter. Specifically, when constructing $\bU^{-1}, \bV^{-1}$, we set three levels (``low'', ``moderate'', ``high'') of $\rho$ as
$$
\rho_{\text{hub}} = (0.3, 0.4, 0.6),\quad
\rho_{\text{band}} = (0.4, 0.6, 0.8),\quad
\rho_{\text{rand}} = (0.2, 0.4, 0.6).
$$
We set $n=20$ and $(p,q)=(50,50)$, with all other experimental conditions remain unchanged. As before, we fix the row covariance matrix to the band structure with $\rho_{\text{band}}=0.4$, while varying the column structure and its signal size. Since the row covariance structure is fixed, the row graph selection results are identical to those in the main experiment; thus, we only report the results from column graph selection.

\begin{figure}[htbp]
  \centering
  \includegraphics[width=\textwidth]{./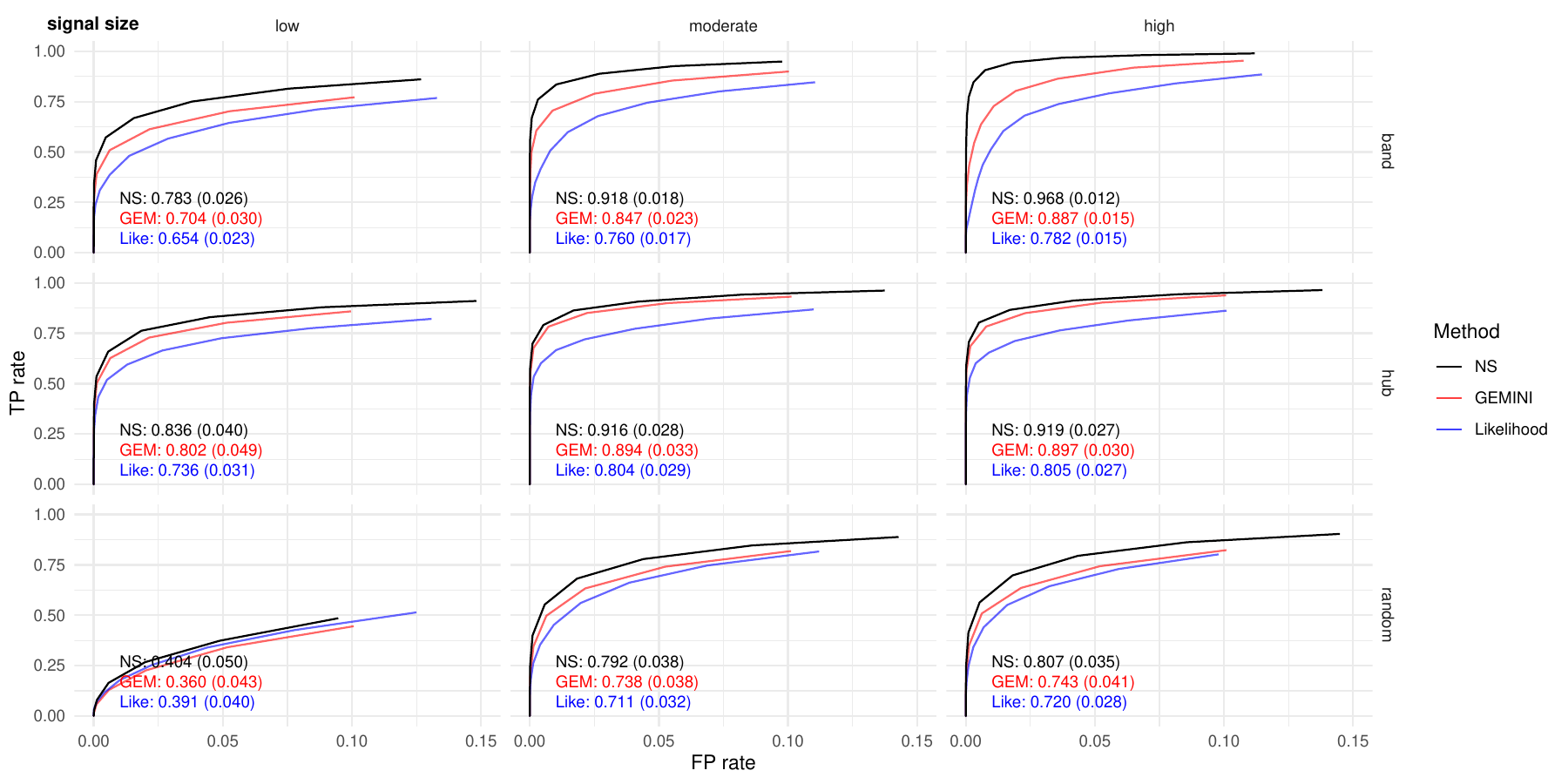}
  \caption{The ROC curves and partial AUC values for the column-wise graph estimation, under various signal strength. The horizontal axis denotes the signal strength parameter, increasing from left (``low'') to right (``high'').}
  \label{fig:sim_signalsize_col}
\end{figure}

Figure \ref{fig:sim_signalsize_col} shows that the overall estimation accuracy increases as the signal strength increases, and our proposed method always outperforms the other methods.

\end{document}